\DeclareSymbolFont{rsfs}{U}{rsfs}{m}{n}
\DeclareSymbolFontAlphabet{\mathscr}{rsfs}
\theoremstyle{thmstyleone}%
\newtheorem{theorem}{Theorem}
\theoremstyle{thmstyletwo}%
\theoremstyle{thmstylethree}%
\begin{document}

\title[DHSA]{DHSA: Efficient Doubly Homomorphic Secure Aggregation for Cross-silo Federated Learning}


\author[1,3]{\fnm{Zizhen} \sur{Liu}}\email{liuzizhen18s@ict.ac.cn}
\author[2]{\fnm{Si} \sur{Chen}}\email{si.chen@osr-tech.com}

\author*[1,3]{\fnm{Jing} \sur{Ye}}\email{yejing@ict.ac.cn}

\author[2]{\fnm{Junfeng} \sur{Fan}}\email{fan@osr-tech.com}
\author[1,3]{\fnm{Huawei} \sur{Li}}\email{lihuawei@ict.ac.cn}
\author[1,3]{\fnm{Xiaowei} \sur{Li}}\email{lxw@ict.ac.cn}

\affil*[1]{\orgdiv{Institute of Computing Technology}, \orgname{Chinese Academy of Sciences}, \orgaddress{\street{No.6 Kexueyuan South Road}, \city{Beijing}, \postcode{100190}, 
\country{China}}}

\affil[2]{
\orgname{Open Security Research}, \orgaddress{\street{No.18 Science and technology Road}, \city{Shenzhen}, \postcode{518063}, 
\country{China}}}

\affil[3]{
\orgname{CASTEST}, \orgaddress{\street{No.18 Zhongguancun Road}, \city{Beijing}, \postcode{100083}, 
\country{China}}}


\abstract{Secure aggregation is widely used in horizontal Federated Learning (FL), to prevent leakage of training data when model updates from data owners are aggregated. Secure aggregation protocols based on Homomorphic Encryption (HE) have been utilized in industrial cross-silo FL systems, one of the settings involved with privacy-sensitive organizations such as financial or medical, presenting more stringent requirements on privacy security. However, existing HE-based solutions have limitations in efficiency and security guarantees against colluding adversaries without a Trust Third Party. 

This paper proposes an efficient Doubly Homomorphic Secure Aggregation (DHSA) scheme for cross-silo FL, which utilizes multi-key Homomorphic Encryption (MKHE) and seed homomorphic pseudorandom generator (SHPRG) as cryptographic primitives. The application of MKHE provides strong security guarantees against up to $N-2$ participates colluding with the aggregator, with no TTP required. To mitigate the large computation and communication cost of MKHE, we leverage the homomorphic property of SHPRG to replace the majority of MKHE computation by computationally-friendly mask generation from SHPRG, while preserving the security. Overall, the resulting scheme satisfies the stringent security requirements of typical cross-silo FL scenarios, at the same time providing high computation and communication efficiency for practical usage. We experimentally demonstrate our scheme brings a speedup to 20$\times$ over the state-of-the-art HE-based secure aggregation, and reduces the traffic volume to approximately 1.5$\times$ inflation over the plain learning setting.}

\keywords{Federated Learning, Security, Efficient, Homomorphic}



\maketitle
\section{Introduction}\label{sec1}

Recently, federated learning (FL) has emerged as a popular solution to build machine learning models based on data sets distributed across multiple parties \cite{RN69}. Due to concerns of data privacy, participants in FL (e.g. mobile devices or whole organizations) collaborate to train a machine learning model without sharing their raw training data \cite{RN31}. Instead, in each epoch of the training process, the model is first trained locally on respective training data, and the local model update information is exchanged among the participants to generate a global model. According to application scenarios \cite{RN68}, FL can be divided into cross-device FL and cross-silo FL. Based on the distribution characteristics of the data sets, FL can be categorized into vertical FL and horizontal FL. This paper focuses on the cross-silo, horizontal FL, where a small number of organizations (e.g. medical or financial institutions)
with reliable communications and relative abundant computing
power jointly build a model by updating the global model with the aggregation of local model updates.

However, as recent studies argue, although the local training data is not exposed in FL, the exchanged model data still carries extensive information about the training data. The inference attacks can be potentially exploited to reveal sensitive information about the training data, which may occur during the training process or upon the trained model \cite{8835245, 9148790, 274683}. The inference attacks during the training process refer to the information inferring given the model update. For example, Deep Leakage from Gradients (DLG) algorithm can completely recover the training data from the uploaded gradients during the training process \cite{RN34}. Trained models obtained by FL can also leak private information about the training data, which can be handled separately.  Consequently, privacy protection techniques including secure multiparty computation (SMC) \cite{RN44, RN167, RN168, RN169, RN170}, homomorphic encryption \cite{RN48, RN47, RN27, RN71, DBLP:conf/ndss/SavPTFBSH21}, and differential privacy \cite{RN49, RN77} are proposed to implement secure aggregation schemes to address such indirect information leakage issues.

Existing works on privacy-preserving FL have mainly focused on cross-device settings, where the clients are up to millions of mobile or IoT devices with limited computing power and unreliable communications. In this scenario, the main challenges are the efficiency of systems with a large number of participants and the resilience against the frequent dropout problem, which are not the dominant concerns in the cross-silo setting \cite{RN68}. Instead, in cross-silo FL, the participating institutions such as medical \cite{RN59} or financial \cite{RN68} put forward more stringent requirements on privacy. On one hand, the model update and final trained model should be exclusively released to only those participating organizations, preferably not to external parties including the central server and any Trust Third Party. On the other hand, in the colluding case, the local model update is desired to be exchanged without exposure. Efficiency is another concern: computation and communication costs should be low enough to enable industrial deployment. Besides, the trained model is expected to achieve the same quality as one obtained from FL with plaintext learning. 

To our knowledge, there exist only a few solutions for secure aggregation of cross-silo FL settings, where the additively Homomorphic Encryption (HE) is commonly used for the strong privacy guarantee. With HE, clients encrypt their updates in such a way that the server can perform aggregation directly on ciphertexts without deriving any meaningful information about the plaintexts \cite{RN47}. However, for deep neural networks with millions of parameters, direct encryption of gradients brings significant overhead to computation and communication. One of the state-of-art works, BatchCrypt, develops batch encryption techniques enabling data-parallel computation and data compression to improve efficiency \cite{RN48}. While the computation overhead is amortized to some extent, the computation time of aggregation is still orders of magnitude larger than plain aggregation without encryption, and communication traffic is inflated by at least 2 times \cite{RN48}. More seriously, the security guarantee breaks down against collusion threats. Although threshold variant of homomorphic encryption \cite{RN27} and multi-input functional encryption \cite{RN71} are proposed to handle the colluding threats, these solutions require a Trust Third Party, which usually is not practical in cross-silo FL.

To address the limitations of security and efficiency as mentioned above, we develop a practical secure aggregation scheme, named doubly homomorphic secure aggregation (DHSA), which provably meets the requirements of cross-silo FL. DHSA comprises two protocols: the Homomorphic Model Aggregation (HMA) protocol and the Masking Seed Agreement (MSA) protocol. In the HMA Protocol, instead of encrypting all the model data with HE, a simple masking scheme based on Seed Homomorphic Pseudorandom Generator (SHPRG) is utilized to hide the model updates. The masks are generated by SHPRG, taking locally sampled masking seeds as inputs, and the demasking seed of the aggregation results is computed securely by the MSA protocol, where multi-key Homomorphic Encryption (MKHE) is utilized to ensure the aggregation of masking seeds is learned only by the clients, and nothing beyond the aggregation is revealed. Based on the basic protocols, instead of calling MSA protocol to return one demasking seed each time the model aggregation is performed by HMA protocol, we optimize the efficiency by setting up demasking seeds for multiple epochs during per execution of MSA protocol, which takes advantage of the packing technique of selected MKHE to further reduce overheads.

In summary, this simple construction has the following merits compared with other secure aggregation schemes for cross-silo FL:
\begin{enumerate}[(1)]
\item Securely perform the model update aggregation. The combination of individual masking encryption and MKHE provably provides strong privacy security even when the server colludes with up to $N-2$ clients. DHSA provides strong security guarantees without the assumption of TTP, making the implementation more practical than existing works.  
\item Efficient in both computation and communication aspects. The masking operation is simple to implement, and we take advantage of the homomorphic property of SHPRG to simplify the achievement of demasking further, enhancing the efficiency from both communication and computation aspects compared with the previous HE-based solutions. DHSA brings $O(M)$ computation complexities for each client and the server, where $M$ is the number of model parameters, and only approximately 1.5× inflation of traffic. We experimentally demonstrate DHSA brings a speedup to 20× over the latest related work, BatchCrypt.
\item No compromise on accuracy. DHSA experimentally obtains a similar accuracy than non-secure uncompressed FedAvg.
\end{enumerate}

\section{Background and Problem Statement}\label{sec2}

\subsection{Privacy Threats in Federated Learning}
The basic concept of federated learning is to protect the privacy of training data by keeping the training data on local devices. However, the shared model-related information still carries extensive information about the corresponding training data, resulting in the exposure of training data privacy from model inference attacks. Depending on the data exploited by attackers to infer privacy, the model inference attacks may occur at different stages of FL learning. Also, the trained model should be protected for its intrinsic commercial value. Overall, the following three types of privacy 
concerns can be imposed. 
\begin{enumerate}[(1)]
\item Inference attack during the learning process. The information flow of intermediate results in the learning process mainly includes individual model updates and the aggregation of clients' model data. The individual model updates (model parameters or gradients) can leak sensitive information about the individual training data. For example, various membership inference approaches are introduced to verify if a record is in the training dataset based on the parameter updates  \cite{Nasr2018ComprehensivePA, 8737416, 9148790}; property inference attacks \cite{8737416} allow the adversaries to learn sensitive properties of individual training data; more seriously, DLG algorithm \cite{RN34}\cite{zhao2020idlg} is presented most lately to completely steal the training data from gradients. The aggregation of clients' model data is the merge of multiple contributions, spontaneously covering up the sensitive information of a single party, with relatively low risk to reveal the individual information. 
\item Inference attack based on the final model outputs. The output of the trained model is another attack source to reveal training data privacy. Reference \cite{274683} shows even when parties obtain only black-box access to the final model, an attacker can infer a dataset property based on a set of queries.

\item In addition to preventing training data from leaking through inference attacks, the device or organization may deploy the trained machine learning model for the financial application, and restrict the ability to inspect, misuse or steal the model. 
\end{enumerate}

 \subsection{Privacy Solution in Federated Learning}
Existing privacy solutions for FL apply privacy protection techniques, including Secure Multiparty Computation (SMC) \cite{RN44, RN167, RN168, RN169, RN170}, Homomorphic Encryption (HE) \cite{RN48, RN47, RN27, RN71,DBLP:conf/ndss/SavPTFBSH21}, and Differential Privacy (DP) \cite{RN49, RN77} to address the inference attacks at different stages. In this section, we briefly review these strategies, and discuss the applicability for cross-silo FL. 

Secure Multi-party Computation (SMC) guarantees that a set of parties compute a function in a way that each one cannot learn anything except the output, and SMC-based techniques mainly utilize Yao’s garbled circuits \cite{4568388} or secret sharing \cite{10.1145/359168.359176}\cite{RN170}\cite{10.1007/3-540-45472-1_7}. A notable work of SMC-based privacy-preserving FL is the secure aggregation protocol proposed by Bonawitz et al. \cite{RN44}. They develop a double masking solution (Double-mask), which achieves secure aggregation against colluding participants and is robust to dropout. In this solution, for any pair of clients, they securely agree upon a shared mask, and each client generates an individual mask. The message of each client is masked with $N$ masks in a designed way. By secretly sharing all the masking seeds, their reconstruction is allowed upon sufficient shares. Thus, Double-mask is robust to dropouts frequently occurring in cross-device FL. However, the quadratic growth of computation overhead w.r.t. $N$ is the major bottleneck. Subsequently, TurboAgg \cite{RN167} utilizes a circular communication topology to reduce the computation overhead. SecAgg+ achieves polylogarithmic communication and computation per client via communication graph \cite{RN169}. FastSecAgg presents an FFT-based multi-secret sharing scheme to obtain $O(M\log N)$ cost \cite{RN168}. Although gaining the improvement of efficiency, the security guarantee of those works is weaker to some extent. In the cross-silo FL, unlike the massive-scale user case, all clients are able to participate in each round with robust connectivity to the system. The above methods that handle the dropout problems at high overhead are not the best choice for cross-silo settings. In addition, the sacrifice of security is not acceptable in cross-silo FL. For example, TurboAgg provides the privacy guarantee against up to $N/2$ colluding parties, which is insufficient for the security requirements of cross-silo FL.

Apart from SMC, HE is another common tool for addressing the privacy leakage problem during the learning process of FL, especially of cross-silo FL, because it seeks stronger guarantees on privacy. Many recent works \cite{RN48, RN47, RN27, RN71, RN64} advocate the use of additively HE schemes, notably Paillier \cite{RN26}, as the primary means of privacy guarantee in FL. In this scheme, all the model gradients updated by clients are encrypted by HE to ensure no information leakage, and the homomorphic property of HE allows the aggregation operation to be performed directly on ciphertexts without prior decryption. However, HE performs complex cryptographic operations that are expensive to compute. Especially for deep neural networks with millions of parameters, direct encryption of gradients brings significant overhead to computation and communication. To improve the efficiency of Paillier-based secure aggregation, BatchCrypt \cite{RN48}  develops new quantization and encoding schemes. This method achieves batch encryption which enables data-parallel computation and data compression. Although compared with Paillier without batching, the computation and communication overheads are significantly reduced, questions arise about the collusion threats. For all clients utilizing the same secret and public key pairs, it’s easy to decrypt the ciphertext of others when a client colludes with the central server. Recent works \cite{RN27, RN71} handle the colluding problem by developing the distributed decryption method. In these schemes, different public keys are utilized to encrypt the model data, and the decryption needs the collaboration of multiple parties. Reference \cite{RN27} applies the threshold variant of Paillier cryptosystem, where a Trust Third Party distributes shares of the secret key to the set of participants such that no subset of the parties smaller than a predefined threshold is able to decrypt values. However, it is hard to find a TTP in the cross-silo setting actually. From another point of view, the introduction of TTP is an extra threat to privacy security. Multi-party homomorphic encryption has been also utilized in secure aggregation of FL \cite{DBLP:conf/ndss/SavPTFBSH21, froelicher2021scalable} to provide security against colluding parities, nevertheless, resulting in unacceptable data transmission cost. 

Differential Privacy (DP) improves the privacy of machine learning models by injecting noises \cite{RN49, RN77}. Wei et al. \cite{wei2020federated} propose a framework based on the concept of DP to prevent information leakage from model updates. DP can also be combined with secure aggregation to defense the inference attacks from trained models \cite{RN71}, or further ensure that the aggregated result does not reveal additional information to the server \cite{7286780}. However, the application of DP always needs to tradeoff between the model quality and privacy protection levels. It's challenging for DP itself to provide the required security guarantees of cross-silo FL while preserving the model quality. It can be applied in our scheme as an independent supplementary component to provide further security guarantees. 
\subsection{Problem statement}

In this paper, we target privacy of the horizontal, cross-silo federated learning, where $N$ data owners (also called clients) collaboratively train a model with $M$ parameters with the coordination of an aggregator (also called the server). The typical optimization algorithms Federated Averaging is leveraged to optimize the model \cite{article}, where each client provides the local model update $m_u$ and the aggregation result $\sum m_u$ is computed to update the global model. This process is repeated until the model accuracy and loss reach convergence to obtain the final global model $m_{\textrm{final}}$. Then the trained model is released to the participating data owners in cross-silo settings.  To avoid the inference of training data privacy from the exchanged model update $m_u$ during the learning process \citep{274683}, secure aggregation aims to compute $\sum{m_u}$ without revealing additional sensitive information beyond the model aggregation.

The threat model is honest-but-curious, and allows colluding. The potential adversaries in FL may be clients or the server who have access to the intermediate exchanged information. In the colluding case, the adversary may control a set of up to $T$ clients, and may also control the aggregator, where the adversary can view the internal state and all the messages received/sent by the controlled ones in colluding sets. The adversary attempts to learn about the model updates from other parties by using the viewed messages exchanged during the execution of the protocol. As a result, adversaries steal the valued model or utilize the model data to infer the training data of some clients.  

With the above threats under consideration, the goal of the proposed secure aggregation is to meet the standards of cross-silo FL mainly from the following three aspects.
\begin{enumerate}[(1)]
\item Security.
As opposed to the cross-device FL setting, the cross-silo FL setting typically requires much more stringent privacy protections for the training data, and for the trained model \cite{RN68}. The security requirements are reflected in the following aspects. (1) A secure solution should preferably achieve the highest possible value of the collusion threshold (the maximum number of colluding parties allowed within the security guarantee), i.e. $T_{col}=N-2$. It means that clients and the server learn global model updates without revealing any additional information (e.g. other parties’ local model update or training data) even in the worst colluding cases where the server colludes with $N-2$ clients. \footnote{Note that the security against collusion of $N-1$ parties is not considered in the aggregation scheme, because even in the ideal secure aggregation system, the model data $m_{h}$ of the honest client can be indicated from $\sum{m_u}$ and the joint view of $N-1$ colluding parties.}  (2) The exchanged model data should only be visible to the necessary parties involved. Each local model update $m_u$ can be accessed only by data owner $u$. The aggregated intermediate model $\sum{m_u}$ and the final trained model $m_{\textrm{final}}$ can be accessed only by data owners, and in particular, not accessible by the server or any Trust Third Party.
\item Efficiency.
The privacy-preserving scheme should be able to be implemented efficiently to enable the industrial deployment of cross-silo FL.
The computation and communication overhead should be reduced to minimize the communication traffic and speed up the learning process.
\item Model quality.
A strong privacy guarantee should not be achieved at the cost of model quality. The degradation of the trained model’s accuracy should be minimal compared with the ideal unquantized model trained without a secure aggregation scheme.
\end{enumerate}
\section{Cryptographic Tools}\label{sec3}

\subsection{Seed Homomorphic PRG}
Recall that a pseudorandom generator (PRG) is a deterministic polynomial-time algorithm $F:\{0,1\}^n\rightarrow \{0,1\}^m$ such that $n<m$, and for randomly distributed $s\in \{0,1\}^n$ and $r\in \{0,1\}^m$, the distributions of $F(s)$ and $r$ are computationally indistinguishable. A PRG $F:\chi \rightarrow \gamma $, where $(\chi,\oplus)$ and $(\gamma, \otimes )$ are 
groups, is said to be seed homomorphic if the following property holds \cite{RN39}: 
For every $s_1, s_2\in \chi $, we have that $F(s_1)\otimes F(s_2)=F(s_1\oplus s_2)$.

An almost Seed Homomorphic Pseudorandom Generator(SHPRG ) can be constructed based on the Learning With Rounding (LWR) problem. With the public parameters $n,m,p,q$ satisfying $p<q, n<m$, the SHPRG $G(s): \mathbb{Z}_{q}^{n} \rightarrow \mathbb{Z}_{q}^{m}$ can be defined as $G(s)=\left \lceil A^{T}\cdot s\right \rfloor_p$, where $A$ is another public parameter randomly sampled from $\mathbb{Z}_{q}^{n\times m}$, $s$ is uniform in $\mathbb{Z}_{q}^{n}$, and $\lceil\cdot\rfloor_p$ is defined as ${\left \lceil x \right \rfloor}_p=\left \lceil x\cdot p/q \right \rfloor$ for $x\in \mathbb{Z}_q$. It is almost seed homomorphic in the following sense: 
$$
G(s_1+s_2)=G(s_1)+G(s_2)+e, e\in [-1, 0,1]^{m}.
$$

Note that the security of the above SHPRG depends on the hardness of LWR$_{n,q,p}$ problem \cite{RN76}. The value of $1/p$ is proportional to the error rate $\alpha$ in Learning With Error (LWE), so the selection of parameters should assure that LWE$_{n,q,1/p}$ has difficulty satisfying the security level objective. 

Multiple privacy-critical applications have been built from Seed Homomorphic PRG or the related preliminary Key Homomorphic Pseudorandom Functions, such as distributed PRFs, undatable encryption \cite{RN39} and private stream aggregation \cite{RN171, RN172}. The homomorphism property is in support of specific applications with provable security.

\subsection{Multi-key Homomorphic Encryption}

Homomorphic Encryption (HE) schemes allow certain mathematical operations to be performed directly on ciphertexts, without prior decryption. Exactly, the homomorphic scheme allows some operations to be directly performed on the ciphertexts $E(m_1)$ and $E(m_1)$ to obtain the result which corresponds to a new ciphertext whose decryption yields the sum or the multiplication of the plaintext $m_1$ and $m_2$. The Paillier \cite{RN48}, BFV \cite{cryptoeprint:2012:144} and
CKKS \cite{cheon2017homomorphic} schemes are the most prevalent HE variants. The Paillier
scheme allows only addition operation to be performed on ciphertexts,
while the BFV and CKKS schemes permit both additions and
multiplications. The ciphertext packing technique of BFV and CKKS allows encrypting multiple data in a single ciphertext and performing parallel homomorphic operations in a SIMD manner \cite{2014Fully}. In the traditional single-key HE schemes, all the parties involved share the same key for encryption and decryption, and the decryption can be done independently at any party.

Multi-key Homomorphic Encryption (MKHE) is a cryptosystem in which each party generates its own keys, and the specific operation can be performed on ciphertexts encrypted by different parties. The decryption of the obtained new ciphertexts is achieved by combining the respective secret keys associated with these ciphertexts \cite{10.1007/978-3-662-53644-5_9}. MK-BFV is a multi-key variant of the BFV HE scheme \cite{cryptoeprint:2012:144} which is an Ring Learning With Error (RLWE)-based  \cite{10.1145/1568318.1568324} cryptosystem (refer to Appendix A and B for the details of RLWE and BFV). So far, there are at least two types of MK-BFV schemes: MK-BFV based on ciphertexts extension (see Appendix C) \cite{RN81} and Compact MK-BFV \cite{RN43}.  Since homomorphic multiplication is not involved in the proposed protocol, only the linear version of the two types of MK-BFV schemes is stated in this paper.

Let $\mathcal{P}$ be a set of $N$ parties that have access to an authenticated channel and a random common reference string
(CRS), the Compact MK-BFV scheme is consisted of a tuple of steps (KeyGen, Enc, Dec, Eval, Public-key-switching) based on BFV scheme, which are defined as follows.  
\begin{enumerate}[\textbullet]
\item Setup: $pp \leftarrow \textrm{MKBFV.Setup}(\lambda)$. $\textrm{MKBFV.Setup}$ takes the security and homomorphic capacity parameters as inputs, setting the RLWE dimension $n$, ciphertext modulus $q$, key distribution $\chi $ over $R$, and error distribution $\psi,\phi$ over $R_q$. With uniformly sampled $a\leftarrow R_q$, the outputs is a public
parameter $pp=\{n, q, \chi, \psi,\phi,  a\}$, which is an implicit argument to the other procedures.
\item Key Generation: $(\textrm{sk}_i, \textrm{cpk}) \leftarrow \textrm{MKBFV.} \prod_{\textrm{KeyGen}}(pp)$.\\ $\textrm{MKBFV.} \prod_{\textrm{KeyGen}}(pp)$ is a protocol whose input is the public parameter $pp$, and outputs are key pairs ($\textrm{sk}_i, \textrm{cpk}$) to each party $P_i$. The protocol is composed of two steps:

\fbox{%
\parbox{0.88\textwidth}{%
    \begin{center}
    \centering{\textbf{Key Generation Protocol.}}
    \leftline{\textbf{Input}: Public parameter $pp$;}
    \leftline{\textbf{Output}: $\{\textrm{sk}_i, \textrm{cpk}\}$ for each party $P_i$;}
    \leftline{\emph{Client $P_i$}:}
    \leftline{\shortstack[l]{$\{\textrm{sk}_i,\textrm{pk}_i\}\leftarrow \textrm{BFV.KeyGen}(pp)\}$. Generates $\{\textrm{sk}_i,\textrm{pk}_i\}$ by the usual\\ key generation calculation of BFV, where $\textrm{sk}_i=s_i, \textrm{pk}_i=(p_{0,i}, p_{1})$ \\ $=(-s_i\cdot a+e_i, a)$.}}
    \leftline{\emph{Out}:} \leftline{\shortstack[l]{$\textrm{cpk}\leftarrow\textrm{MKBFV.ComKey(}\textrm{pk}_1, \textrm{pk}_2...\textrm{pk}_N)$. Given $\{\textrm{pk}_i\}$ of all parties, \\the common public key is computed by cpk = $([\sum_{P_i\in \mathcal{P}}{p_{0,i}}]_q, p_1)=$\\$([-(\sum_{P_i\in \mathcal{P}}{s_i})a +\sum_{P_i\in \mathcal{P}}{e_i}]_q, a)$, and the corresponding secret key \\is $\textrm{sk}=\sum_{P_i\in \mathcal{P}}{s_i}$.}}
    \end{center}
    }%
}

\item Encryption: $\textrm{ct}_i\leftarrow \textrm{MKBFV.Enc}(\textrm{cpk}, x_i$). Upon receiving the common public key cpk and a plaintext $x_i$, $\textrm{MKBFV.Enc}(\textrm{cpk}, x_i)$ exploits the usual encryption calculation of BFV to encrypt message under $\textrm{sk}$ and outputs $\textrm{ct}_i$ = BFV.Enc($\textrm{cpk}, x_i)\in R_q^2$.
\item Decryption: $x\leftarrow \textrm{MKBFV.Dec(\textrm{sk}, \textrm{ct})}$. Given a ciphertext $\textrm{ct}$ encrypting $x$ and the corresponding secret key $\textrm{sk}$, $\textrm{MKBFV.Dec}$ runs the usual decryption calculation of BFV to output $x = \textrm{BFV.Dec(\textrm{sk}, \textrm{ct})}$.
\item Evaluation: ${\textrm{ct}}'\leftarrow\textrm{MKBFV.Eval(ct}_1,\textrm{ct}_2$, $F(\cdot)$). Given the ciphertexts $\textrm{ct}_1, \textrm{ct}_2$, as well as a linear function $F(\cdot)$, the evaluate function $\textrm{BFV.Eval}$ outputs the ciphertext $\textrm{ct}'$ such that $\textrm{BFV.Dec(sk, ct}')=F(x_1, x_2)$.
\item Public-key-switching: ${\textrm{ct}}'\leftarrow  \prod_{\textrm{PubKeySwitch}}(\textrm{ct},{\textrm{pk}}', \textrm{sk}_1,...,\textrm{sk}_N$).\\ $\prod_{\textrm{PubKeySwitch}}(\textrm{ct},{\textrm{pk}}', \textrm{sk}_1,...,\textrm{sk}_N)$ is a protocol where the participants collaboratively re-encrypt the ciphertexts without decrypting them. Given a ciphertext $\textrm{ct}=(c_0, c_1)$ under $\textrm{sk}=\sum{\textrm{sk}_i}$ and an output public-key ${\textrm{pk}}'=(p_0',p_1')$ for secret-key ${\textrm{sk}}'$, the parties re-encrypt $\textrm{ct}$ under ${\textrm{sk}}'$ to obtain ${\textrm{ct}}'=(c_0',c_1')$. The obtained ciphertexts can be decrypted by performing BEV.Dec(${\textrm{sk}}', {\textrm{ct}}'$).
\end{enumerate}

\fbox{%
\parbox{0.88\textwidth}{%
    \begin{center}
    \centering{\textbf{Public Key Switch Protocol.}}
    \leftline{\textbf{Public input}: $\textrm{ct}=(c_0, c_1)$, ${\textrm{pk}}'=(p_0',p_1')$;}
    \leftline{\textbf{Private input}: $\textrm{sk}_i$ for party $P_i$;}
    \leftline{\textbf{Output}: ${\textrm{ct}}'=(c_0',c_1')$;}
    \leftline{\emph{Client $P_i$}:}
    \leftline{\shortstack[l]{$(h_{0,i}, h_{1,i})\leftarrow \textrm{MKBFV.ParKeySw}({\textrm{pk}}', \textrm{sk}_i, \textrm{ct})$. Samples $u_i\leftarrow \chi$, \\$e_{0,i}\leftarrow \phi$ and computes $(h_{0,i}, h_{1,i})= (s_ic_1+u_ip_0'+e_{0,i}, u_ip_1'+e_{1,i})$.}}
    \leftline{\emph{Out}:} \leftline{\shortstack[l]{$\textrm{ct}'\leftarrow\textrm{MKBFV.MerKeySw}(\{(h_{0,i}, h_{1,i})\}_{i\in\mathcal{P}})$. After calculating \\$h_0=\sum_{j}{h_{0,j}}$ and $h_1=\sum_{j}{h_{1,j}}$ upon the given $\{(h_{0,i}, h_{1,i})\}_{i\in\mathcal{P}}$, \\$\textrm{MKBFV.MerKeySw}$ outputs the re-encryption ciphertexts $\textrm{ct}'=$\\$(c_0', c_1')=(c_0+h_0, h_1)$.}}
    \end{center}
}%
}

The Compact MK-BFV satisfies the semantic security and can provide security guarantees against $N-1$ colluding adversaries. In this paper, we implement secure aggregation by exploiting Compact MK-BFV instead of that based on ciphertext extension. We will explain the reason in the next section.

\section{Doubly Homomorphic Secure Aggregation Scheme}\label{sec4}

In this section, we describe our proposed doubly homomorphic secure aggregation scheme (DHSA) for cross-silo FL which includes two layers of protocols: the Homomorphic Model Aggregation (HMA) protocol for model aggregation and the Masking Seed Agreement (MSA) protocol for masking seeds aggregation. We begin with the motivation and observation of our design. Then we describe the overall process of the DHSA scheme. Finally, we describe the details of the HMA protocol and the MSA protocol.
\subsection{Observation} 
To meet the strict requirements for privacy of cross-silo FL, we propose the application of MK-BFV. To overcome the efficiency bottleneck, SHPRG is leveraged in our framework, which also preserves the security property. In detail, we first show why we apply MK-BFV to cross-silo FL and choose the Compact version. We then explain why to introduce SHPRG-based HMA protocol. 
\subsubsection{Why utilize Compact MK-BFV?}
The first concern of our solution is privacy security, which is especially critical to cross-silo FL. The main security goals are as follows: (1) achieve the $N-2$ collusion threshold. (2) ensure that only the data owners have access to the aggregated intermediate model update and the final trained model. (3) provide the privacy guarantee obviating the need for a TTP. We propose a solution utilizing the multi-key variants of BFV to simultaneously address these challenges. Different from the threshold Paillier utilized in previous work \cite{RN27}, the distributed encryption and decryption keys in MK-BFV are generated individually or jointly among participants without the assistance of the TTP. Besides, the corresponding ciphertexts can be successfully decrypted only when all related parties share the partial computation results. Thus, in general, up to $N-1$ colluding parties can be tolerated without the assumption of TTP, and for the aggregating operation in FL, up to $N-2$ colluding parties can be tolerated. 

As a preparation, we investigate the applicability of the two mainstream constructions of MK-BFV in secure aggregation. One of the choices is the construction based on ciphertexts extension \cite{RN81}. We found it infeasible to apply it in secure aggregation because the curious server has the ability to learn the individual model data. To achieve secure aggregation based on this type of MK-BFV, the following steps should be taken. Each client $P_i$ encrypts individual model updates with the individual public key $\textrm{pk}_i$ to obtain encrypted model data $\textrm{ct}_i=(c_{0,i}, c_{1,i})$ and uploads it to the server for aggregation. The server extends all ${\textrm{ct}_i}$'s to ${\overline{\textrm{ct}_i}}\textrm{'s}\in R_q^{N+1}$ before summing them up. As discussed in Appendix C, for addition arithmetic, the summed ciphertexts associated to $N$ different parties have the form $\overline{{\textrm{ct}}'}=(\sum_{i=1}^N{c_{0,i}},c_{1,1},c_{1,2},...c_{1,N})\in R_q^{N+1}$. To decrypt the result, $\left \langle \overline{{\textrm{ct}}'},(1,s_1,...,s_N) \right \rangle$ needs to be calculated. While for FL scenarios, all clients perform distributed decryption under the coordination of the server, which consists two phases: first, each client $P_i$ provides the partial decryption shares $\mu_i=\overline{ct'}[i]\cdot s_i+e_i=c_{1,i}\cdot s_i+e_i(\mod q)$ and uploads it to the server; then the server merges all received ${\mu_i}$’s by computing $\mu = \overline{ct'}[0]+\sum_{i=1}^N{\mu_i}$.

Problem arises now that both the raw ciphertexts $\textrm{ct}_i=(c_{0,i}, c_{1,i}) $ before being extended and the partial decryption share $\mu_i$ of client $i$ can be accessed by the server. The curious server can retrieve the individual model data by computing $c_{0,i}+\mu_i$, which seems inevitable due to only simple addition operation is involved in the FL systems considered.  

Instead, we invoke the Compact MK-BFV scheme, where a common public key is set up collaboratively to encrypt individual model updates. By doing so, all ciphertexts, including the individual model update and the aggregation, can only be decrypted jointly by all related participants. The aggregation of ciphertexts is done in accordance with the conventional ciphertexts additive operation without extension, and the partial decryption result combines the sum over all ciphertexts as well as the individual secret key. Even when the server gets access to the individual ciphertexts and all the partial decryption results, the server cannot decrypt the individual ciphertexts. 

Further, to meet the request of cross-silo FL that the trained model and aggregated intermediate model updates should be released to no external parties, including the server, the Public Key Switch (PKS) protocol in Compact MK-BFV allows the final decryption to be performed by clients. As opposed to the two-step decryption in MK-BFV, which causes the plaintext to leak to the server, the PKS protocol in Compact MK-BFV adjusts the partial decryption by adding a new LWE instance and returns a new ciphertext instead of the plaintexts after merging. The new ciphertexts can be decrypted with the specific secret key, which is kept private by the clients. 

\subsubsection{Why propose SHPRG-based Homomorphic Model Aggregation protocol?}
Although MK-BFV-based protocol already provides a secure aggregation solution meeting the security requirements, directly applying it on the model updates with millions of entries induces prohibitively large computation and communication overhead. Therefore, we introduce SHPRG to perform the encryption and decryption of the model update instead of MK-BFV, which is the bulk of the computation, while only the seed of SHPRG is aggregated using MK-BFV. We leverage the following traits of SHPRG to reduce the overheads:
(1) the almost additive homomorphic property $\sum_iG(x_i)\approx G(\sum_i x_i)$. 
(2) efficient to compute compared with MK-BFV operations.
(3) the size of inputs (seed) is much smaller than the size of outputs. 

Explicitly, data owner $u$ encrypts its local modal update by
$$
y_u=x_u+G(k_u),
$$
and the summed model update is decrypted by
$$
\sum_u x_u\approx\sum_u y_u-G(k_0),
$$
where $k_u$ is the SHPRG seed generated by data owner $u$, and $k_0=\sum_u k_u$ is the aggregated seed from MK-BFV-based protocol. In this way, the strong security guarantee from MK-BFV can be preserved, and significant efficiency improvement is gained from using SHPRG.

\subsection{Overview of Doubly Homomorphic Secure Aggregation Scheme}

As discussed above, DHSA is constructed with two layers of protocols: the model updates are securely shared and aggregated following the Homomorphic Model Aggregation (HMA) protocol, which calls the Masking Seed Agreement (MSA) protocol to return the demasking seed to the clients and enable demasking to obtain the global model update. Figure~\ref{FIG:1} depicts the mechanism of the combination of the two protocols, which guides the following design of DHSA. The most straightforward construction to build secure aggregation is that, for every epoch $t$ where HMA is performed to do aggregation of model updates, MSA is called once to provide the current demasking seed $k_0^t$. Recall that in cross-silo scenarios, all participating clients in $\mathcal{C}_N$ remain available during the whole iterative learning process of FL, which means the model aggregation in the HMA protocol for every epoch is over the same set of clients. As a result, the demasking seeds $k_0^t$'s for all involved epochs correspond to the same set of clients. Thus, masking and demasking seeds for multiply epochs utilized in the HMA protocol can be prepared in advance. Figure~\ref{FIG:2} depicts the overall process of the mechanism. The MSA protocol is firstly carried out to set up the masking seed pairs for $\tau$ epochs, which are consumed in the next $\tau$ rounds of the HMA protocol. If the FL training convergences within $\tau$ epochs, the execution is ended. If not, all the participants call the MSA protocol another time and repeat the steps until the convergence is achieved.
\begin{figure}
	\centering
	\includegraphics[scale=0.4]{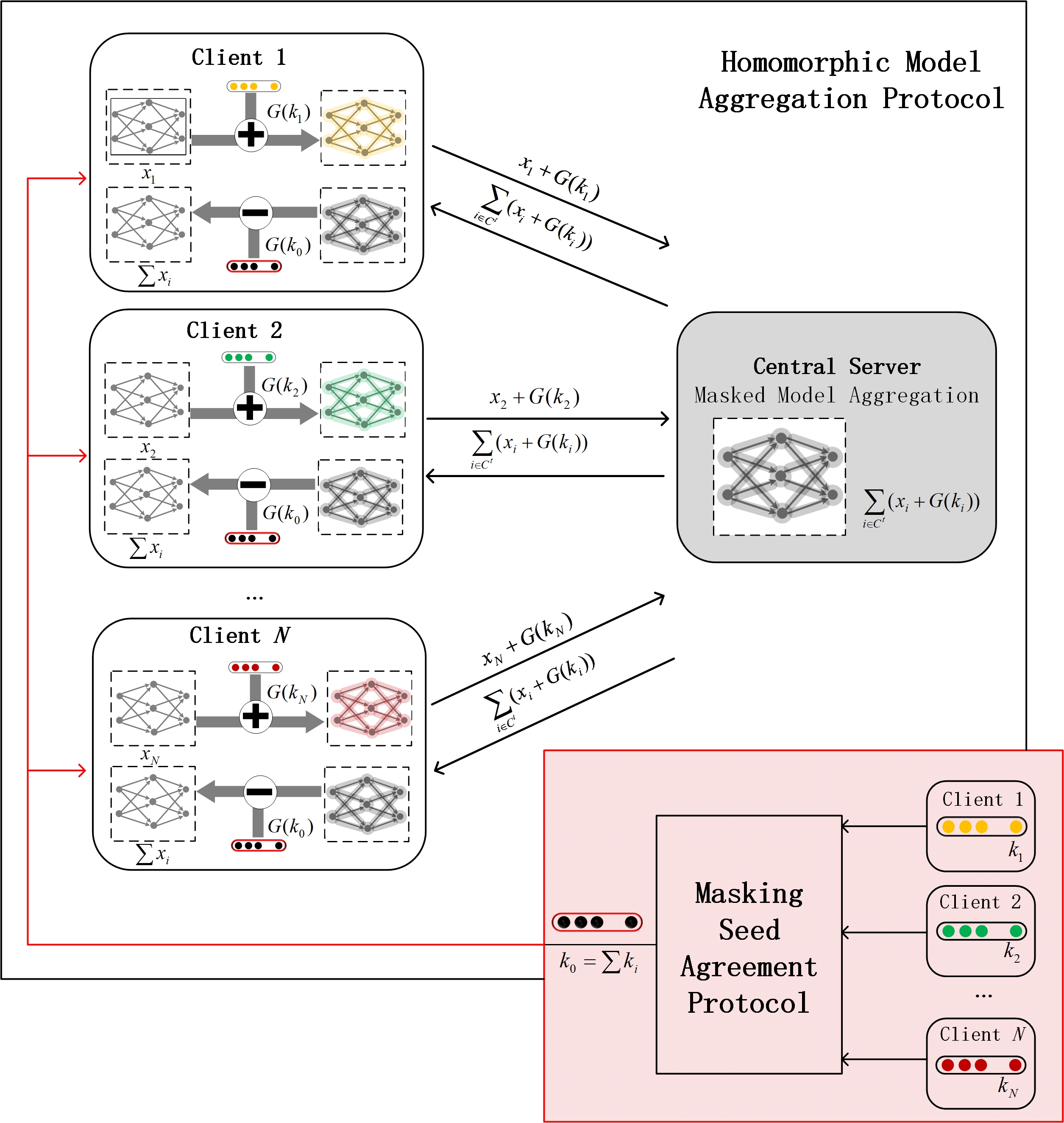}
	\caption{The combination mechanism of two layers of protocols.}
	\label{FIG:1}
\end{figure}

This construction can be implemented more efficiently because we take advantage of the SIMD technique of BFV by packing multiple masking seeds in one BFV ciphertext. In addition to the SIMD optimization, we can evaluate multiple ciphertexts in one execution of the MSA protocol. As a result, only a single round of communication is required for each learning epoch of FL to perform the HMA protocol. It's sufficient to perform the MSA protocol once every $\tau$ epochs, which significantly reduces the number of rounds of communication. For the FL process with a total of $T$ epochs, if $T$ is not bigger than $\tau$, the MSA protocol is executed once. Otherwise, it is executed for $\left \lceil T/\tau \right \rceil$ times. We can adjust the setting of $\tau$ to match different FL applications.
Next, we will describe the HMA and the MSA protocols in detail, respectively.

\begin{figure*}
\fbox{%
\parbox{\textwidth}{%
\begin{enumerate}[\textbullet]

\item \textrm{Assumption: $N  $ Clients $\mathcal{C}_N=\{c_1,c_2,...,c_N\}$ jointly train a DNN model whose number of trainable parameters is $M$, under the coordination of the server $A$; the number of masking seed pairs agreed in per Masking Seed Agreement Protocol is $\tau $;}
\item \textrm{While the model has not converged to desired performance:}
\end{enumerate}
\begin{itemize}
   \item[\labelitemii]\textrm{\emph{Masking Seed Agreement Protocol}.\\
   Input: $K_u=\{k_u^1,k_u^2,...,k_u^\tau \}$ from each client $u$.\\
   Output: $K_0=\{k_0^t=\sum_{u\in \mathcal{C}_N} k_u^t: 1\leqslant t \leqslant \tau\}$ to each client.
   \item[\labelitemii] For the training epoch $t = 1,2,...\tau$, do}
   \begin{itemize}
    \item[-] \textrm{\emph{Homomorphic Model Aggregation Protocol}.\\
    Input: the list of model update and masking seed pairs ${\{m_u^t,k_u^t,k_0^t\}}_{u\in \mathcal{C}_N}$from each client $u$.\\
    Output: the aggregation of all clients’ model updates $m_0^t=\sum_{u\in \mathcal{C}_N} m_u^t$ to each client.
    \item[-] If the model converges to the desired performance, end the for-loop.}
  \end{itemize}
\end{itemize}
}%
}
	\caption{The Overview of Doubly Homomorphic Secure Aggregation.}
	\label{FIG:2}
\end{figure*}

\subsubsection{The Homomorphic Model Aggregation Protocol}

In the Homomorphic Model Aggregation Protocol, the aggregation of clients’ local models is computed under the orchestration of the server, ensuring no information about the individual models is revealed beyond their aggregated value. Figure~\ref{FIG:3} shows the HMA protocol over the online clients that construct the set $\mathcal{C}_N$. Each client first locally samples a random masking seed as the input to the SHPRG and stretches it to a mask for all entries of the model update. Then, they upload masked model update $y_u=x_u+G(k_u)$ to the server. The server aggregates the uploaded data of online clients, and broadcasts $y_0=\sum y_u$ to online clients who receive $k_0$ from MSA. The clients remove the mask which is $G(k_0)$ from $y_0$, and dequantize the result before computing the average to obtain the updated global model. Besides, malicious clients combining the uploaded data from different epochs may induce information leakage about the individual client. To address this threat, clients apply different masking seeds in different epochs to obtain disposable masks.

We instantiate the protocol by the almost seed homomorphic PRG introduced in Section 2. Since the output of SHPRG is in $\mathbb{Z}_{p}^{m}$, we set the public modulus $P$ in our scheme equal to $p$, and the model updates are preprocessed by the quantization operation, which converts each bounded local model update to $w$-bit integer before adding masks. For a model update $m$ in $[m_{\textrm{min}}, m_{\textrm{max}})$, the quantized value of $m$ is 
$$
Q(m)=\left\lfloor\frac{2^w(m-m_{\textrm{min}})}{m_{\textrm{max}}-m_{\textrm{min}}}\right\rfloor,
$$
where $\left\lfloor a \right\rfloor$ is the flooring function that maps $a\in\mathbb{R}$ to the largest integer not greater than $a$. The aggregation of quantized value over $N$ parties is at most $N(2^w-1)$, so we set $p>N(2^w-1)$ to make sure the summed model update does not overflow. For summation result $x$, the corresponding dequantization is performed by 
$$
Q^{-1}(x) = 2^{-w}(m_{\textrm{max}}-m_{\textrm{min}})x+Nm_{\textrm{min}}.
$$

\begin{figure*}
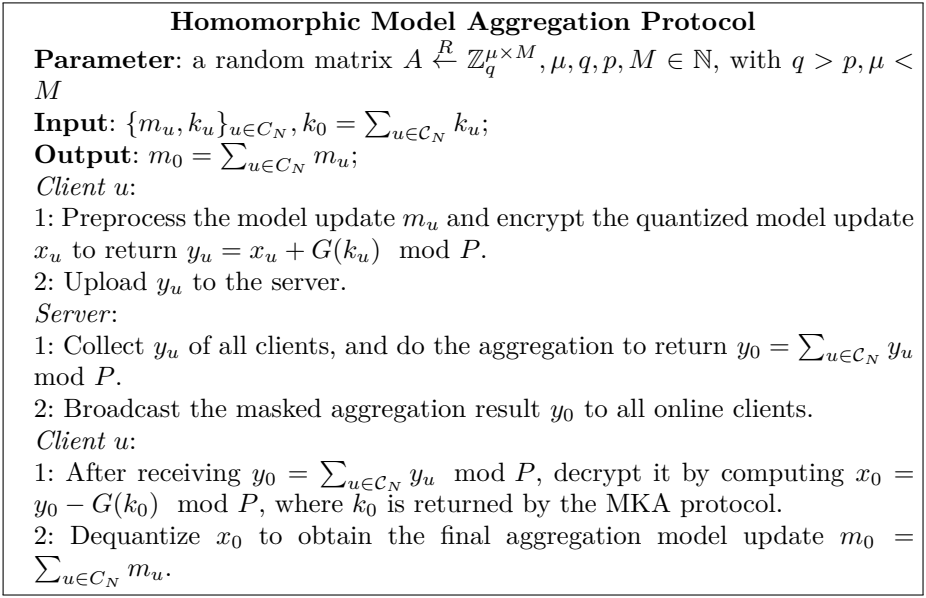

\fbox{%
  \parbox{\textwidth}{%
\begin{center}
\centering{\textbf{Homomorphic Model Aggregation Protocol}}
\begin{enumerate}[-]
\item[]\textbf{Parameter}: a random matrix $A\overset{R}{\leftarrow}\mathbb{Z}_{q}^{\mu \times M},\mu, q, p, M\in\mathbb{N}$, with $q>p, \mu<M$ 
\item[] \textbf{Input}: $\{m_u, k_u\}_{u\in C_N}, k_0=\sum_{u\in \mathcal{C}_N} k_u$;
\item[] \textbf{Output}: $m_0=\sum_{u\in C_N} m_u$;   
\item[] \emph{Client $u$}:\\
\item[]1: Preprocess the model update $m_u$ and encrypt the quantized model update $x_u$ to return $y_u=x_u+G(k_u) \mod P$.
\item[] 2: Upload $y_u$ to the server.
\item[] \emph{Server}:
\item[] 1: Collect $y_u$ of all clients, and do the aggregation to return $y_0=\sum_{u\in \mathcal{C}_N} y_u\mod P$. 
\item[] 2: Broadcast the masked aggregation result $y_0$ to all online clients.
\item[] \emph{Client $u$}:
\item[] 1: After receiving $y_0=\sum_{u\in \mathcal{C}_N} y_u\mod P$, decrypt it by computing $x_0=y_0-G(k_0)\mod P$, where $k_0$ is returned by the MKA protocol.
\item[] 2: Dequantize $x_0$ to obtain the final aggregation model update $m_0=\sum_{u\in C_N} m_u$.
\end{enumerate}
\end{center}
}%
}
\caption{The Homomorphic Model Aggregation Protocol.}	\label{FIG:3}
\end{figure*}
\subsubsection{The Masking Seed Aggregation Protocol}
In the Masking Seed Agreement protocol, the server and clients jointly compute the demasking seeds which are the sum of the masking seeds of corresponding clients, i.e.$k_0=\sum k_u$. The inputs of the MSA protocol are masking seeds for $\tau$ epochs $K_u=\{k_u^1,k_u^2,...,k_u^\tau \}$ from each client $u$, where $k_u^t$ is generated independently by the client $u\in \mathcal{C}_N$ for epoch $t$, and kept private by the individual. The demasking seeds for $\tau$ epochs $K_0=\{k_0^1,k_0^2,...,k_0^\tau \}$ are computed and released to only clients without revealing the individual masking keys, which is achieved by the compact MK-BFV technique. 

As illustrated in Figure~\ref{FIG:4}, the first step of the protocol is key generation to set up the common public key and the re-encryption key. Each client generates a pair of secret and public keys individually, and the common public key is computed for encryption based on the individual pubic keys. Here, the same public parameter is taken as the input of the key-generation algorithm so that the multi-key homomorphic arithmetic is supported. For a client $u$, after the individual key pair $\{\textrm{sk}_u,\textrm{pk}_u\}$ obtained, it uploads the public key $\textrm{pk}_u$ to the server. The server combines all received public keys to return a common public key cpk, and broadcasts it to all the clients for encryption. Also, one leader client generates the re-encryption key pair and releases it to other clients who have access to the final decryption in a secure manner. The Diffie-Hellman Key Agreement technique and Authenticated Encryption can be combined to prevent revealing the re-encryption key pair to any external party including the server \cite{RN44}. It's not the core of our work, so implementation details are omitted here. The re-encryption secret key should be kept private by clients who are the model receiver. To verify the validation of the re-encryption key pair, each client performs encryption and decryption on the received key pair to check the correctness of the decryption result. 

After key generation, the participants perform two rounds of communication for encryption, evaluation, re-encryption and the final decryption. In the first round, each client encrypts the masking seeds with the common public key before sending it to the server. Then, the server aggregates all the received ciphertexts to return the encrypted sum over all masking seeds, which is $\textrm{ct}=\sum{\textrm{ct}_u}$ under the joint secret key $\textrm{sk}=\sum{\textrm{sk}_u}$. In the next round, the clients and the server execute the PubKeySwitch protocol to re-encrypt $\sum{\textrm{ct}_u}$ outputting the ciphertext ReEnc(ct)=Enc(pk$_\textrm{r}$, Dec(sk,ct)) which can be decrypted with the re-encryption secret key $\textrm{sk}_\textrm{r}$. Finally, related clients utilize sk$_\textrm{r}$ to decrypt the received ciphertexts $\textrm{ct}_{\textrm{r}}$ and obtain the demasking seeds.
\begin{figure}[h]%
	\centering
	\includegraphics[width=0.9\textwidth]{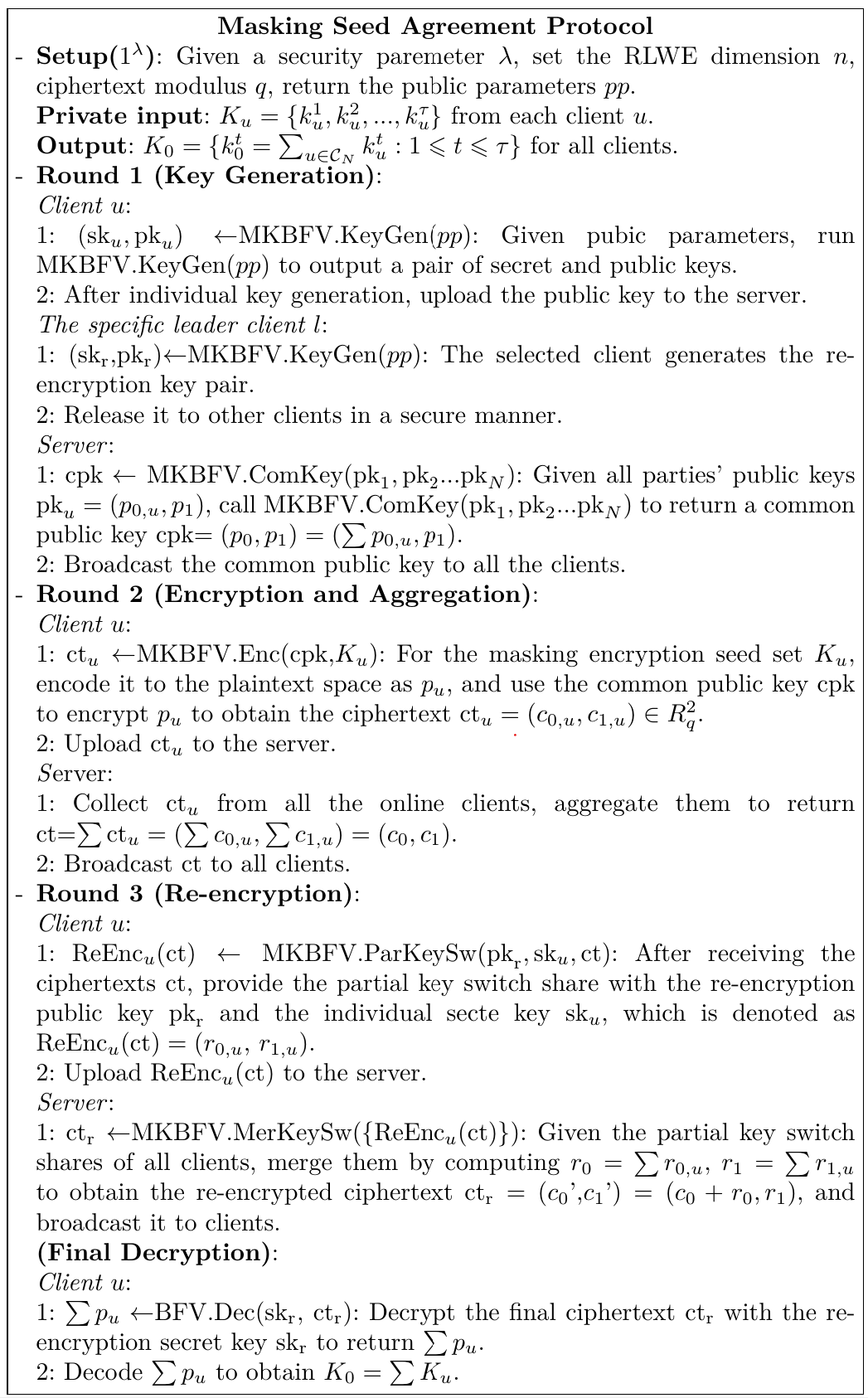}
	\caption{The Masking Seed Aggregation Protocol.}\label{FIG:4}
\end{figure}

\section{Correctness and Security}\label{sec5}
In this section, we state our correctness and security theorems. We consider clients in $\mathcal{C}_N$ and the server $A$ execute DHSA with inputs $m_{\mathcal{C}_N}=\{m_u: u\in \mathcal{C}_N\}$, $\mid \mathcal{C}_N \mid=N$. 
\subsection{Correctness}
Here, we provide a preceding statement of correctness prior to the security analysis.
\begin{theorem}[Correctness Theorem]
If participants in $\mathcal{C}_N$  follow the HMA protocol, given the demasking seed $k_0$ returned by the MSA protocol, clients can obtain $\sum_{u\in\mathcal{C}_N}m_u$ with  negligible noise.
\end{theorem}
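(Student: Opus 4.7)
The plan is to trace the protocol algebraically and isolate the total residual error into two independent sources: the affine quantization map and the \emph{almost} (rather than exact) homomorphism of $G$. First I would fix notation: each client $u$ transmits $y_u = x_u + G(k_u) \bmod P$ with $x_u = Q(m_u)$; the server returns $y_0 = \sum_{u\in\mathcal{C}_N} y_u \bmod P$; and each client computes $x_0 = y_0 - G(k_0) \bmod P$, where $k_0 = \sum_{u\in\mathcal{C}_N} k_u$ is delivered by the MSA protocol (whose correctness is assumed as a hypothesis of the theorem).

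The next step is a short induction on the almost-homomorphism $G(s_1+s_2)=G(s_1)+G(s_2)+e$ with $e\in\{-1,0,1\}^m$, applied across the $N-1$ pairwise sums of the $k_u$'s. This yields $\sum_{u} G(k_u) = G(k_0) + e^{*}$ for some $e^{*}\in\mathbb{Z}^m$ with $\|e^{*}\|_\infty \le N-1$. Substituting into the demasking equation gives
\[
x_0 \equiv \sum_{u\in\mathcal{C}_N} x_u \;-\; e^{*} \pmod{P}.
\]
I would then argue that no modular wrap-around occurs: by quantization $0 \le x_u \le 2^w-1$, so $0\le \sum_u x_u \le N(2^w-1) < P$, and a similar check on the perturbed sum lifts the congruence to an exact identity over $\mathbb{Z}$ (possibly after sharpening the parameter bound so that $P$ exceeds $N(2^w-1)+(N-1)$ rather than only $N(2^w-1)$).

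Finally I would apply the affine dequantization $Q^{-1}$ entrywise. Since $Q^{-1}$ has slope $2^{-w}(m_{\max}-m_{\min})$, the perturbation $e^{*}$ becomes an additive post-dequantization error of magnitude at most $2^{-w}(m_{\max}-m_{\min})(N-1)$ per coordinate, on top of the usual per-summand quantization rounding error of at most $2^{-w}(m_{\max}-m_{\min})$. Both contributions scale as $2^{-w}$, which for the widths $w$ used in practice is negligible, yielding $Q^{-1}(x_0) \approx \sum_{u\in\mathcal{C}_N} m_u$ as claimed.

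The principal obstacle I anticipate is the careful book-keeping of the accumulated SHPRG rounding: the naive per-pair bound compounds to an $O(N)$ error, so the parameter regime must be stated precisely enough (on both $P$ and $w$) that this error stays strictly below the quantization grain after dequantization and simultaneously prevents modular overflow. Once that accounting is in place, the remainder is routine manipulation of the affine maps $Q$ and $Q^{-1}$ together with the inductive identity for $G$.
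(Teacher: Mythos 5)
Your proposal follows essentially the same route as the paper's proof: telescope the almost-homomorphism $G(s_1+s_2)=G(s_1)+G(s_2)+e$ across the $N$ seeds to get $\sum_u G(k_u)=G(k_0)+e^*$ with $\|e^*\|_\infty\le N-1$, substitute into the demasking equation, and conclude that the residual error is negligible relative to the aggregate's range $N(2^w-1)$. Your additional bookkeeping --- sharpening the modulus condition to $P> N(2^w-1)+(N-1)$ to rule out wrap-around from the signed error term, and tracing $e^*$ through the affine dequantization $Q^{-1}$ --- is a correct and welcome tightening of details the paper leaves implicit (it only requires $p\ge N(2^w-1)+1$ and defers the negligibility of the noise to experiment).
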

\begin{proof}
Because the selected PRG is almost seed-homomorphic, we have:
\begin{equation}
\begin{aligned}
\sum{_{i=1}^n}G(k_i)&=G(\sum{_{i=1}^n}k_i)+e \mod p\\& \textrm{where}, e\in \{-n+1,...,0,1,...,n-1\}
\end{aligned}
\end{equation}
For the HMA protocol, $y_u=m_u+G(k_u)\mod P$, where
$G(k_u)\in \mathbb{Z}_P^M, P=p,p\geqslant N(2^w-1)+1)$, we have:
\begin{equation}
\begin{aligned}
m_0&=\sum_{u\in \mathcal{C}_N}y_u-G(k_0)\mod P\\&=\sum_{u\in \mathcal{C}_N}(m_u+G(k_u))-G(\sum_{u\in \mathcal{C}_N}k_u)\mod P\\&=\sum_{u\in \mathcal{C}_N}m_u+\sum_{u\in \mathcal{C}_N}G(k_u)-G(\sum_{u\in \mathcal{C}_N}k_u)\mod P\\&=\sum_{u\in \mathcal{C}_N}m_u+e_0\mod P,e_0\in\{-N+1,...,0,1,...,N-1\}
\end{aligned}
\end{equation}
\end{proof}
The noise here is insignificant relative to the domain of aggregated quantized model updates which is $N(2^w-1)$, and can be demonstrated to have a negligible impact on the quality of the trained model experimentally. We conclude that the model aggregation can be computed correctly based on the given $k_0$.

\subsection{Security}
Then we show that DHSA achieves the two aspects of security goals set for cross-silo FL as described in Section 2.4. We prove our scheme is secure against the server colluding with up to $N-2$ clients in the honest-but-security setting. Those clients and the server learn nothing more than their own inputs, and the sum of the inputs, masking seeds and masks of the other clients. Note that for the ideal aggregation scheme, if the server corrupts a set of clients, the partial aggregation result of the remaining clients will be disclosed as well. The information obtained by the colluding participants in our scheme is the same as the ideal case. Meanwhile, the security guarantee is established on the setting without a TTP, where intermediate model data is visible to the necessary parties involved.  

We consider the execution of DHSA with privacy threshold $T_{col}=N-2$, and underlying cryptographic primitives are instantiated with security parameters $\Lambda$. In such a secure aggregation execution, the view of a client $u$ consists of its internal state (including its model update $m_u$, masking seed $k_u$, mask $G(k_u)$, demasking seed $k_0$, individual key pair for MK-BFV $\{\textrm{sk}_u, \textrm{pk}_u\}$, the aggregated model update $\sum m_{u}$) and all messages this party received from other parties (including common public key cpk, ciphertexts ct and $\textrm{ct}_\textrm{r}$). The view of the server $A$ consists of the received information including individul public keys $\{ \textrm{pk}_u\}_{u\in \mathcal{C}_N }$, ciphertexts $\{ \textrm{ct}_u\}_{u\in \mathcal{C}_N}$, the partial re-encyption share $\{ \textrm{ReEnc}_u(\textrm{ct})\}_{u\in \mathcal{C}_N}$ and the masked model updates $\{y_u\}_{u\in \mathcal{C}_N}$. The messages sent by this party will not be part of the view because they can be determined using the other elements of its view.  

Given any subset $\mathcal{V}\subset \mathcal{C}_N\cup A$, let $\textsf{REAL}_{\mathcal{V}}^{\mathcal{C}_N,T_{col},\Lambda}$ be a random variable representing the combined views of all parties in $\mathcal{V}$ in the execution of DHSA, where the randomness is over the internal randomness of all parties, and the randomness in the setup phase. We show that for any such set $\mathcal{V}$ of honest-but-curious clients of size up to $N-2$, the joint view of $\mathcal{V}$ can be
simulated given the inputs of the clients in $\mathcal{V}$, and the sum of the inputs, masking seeds and masks of the other clients.
\begin{theorem}[Security Theorem]
 There exists a probabilistic polynomial
time (PPT) simulator $\textsf{SIM}$ such that for all $\mathcal{C}_N,m_{\mathcal{C}_N}$, and $\mathcal{V}\subset \mathcal{C}_N\cup A, \mid \mathcal{V}\setminus\mathcal{A} \mid <N-1 $, the output of SIM is computationally indistinguishable from the joint view of $\textsf{REAL}_{\mathcal{V}}^{\mathcal{C}_N,T_{col},\Lambda}$ of the parties in $\mathcal{V}$:
\begin{equation}
\begin{aligned}
&\textsf{REAL}_{\mathcal{V}}^{\mathcal{C}_N, T_{col}, \Lambda}(m_{\mathcal{C}_N}, \mathcal{C}_N)\approx \textsf{SIM}_{\mathcal{V}}^{\mathcal{C}_N,  T_{col}, \Lambda}(m_\mathcal{V}, z_m, z_k, z_g, \mathcal{C}_N)\\
&z_m=\sum_{u\in\mathcal{C}_N\setminus \mathcal{V}}m_u,z_k=\sum_{u\in \mathcal{C}_N\setminus \mathcal{V}}{k_u},z_g=\sum_{u\in\mathcal{C}_N\setminus \mathcal{V}}{G(k_u)}  
\end{aligned}
\end{equation}
\end{theorem}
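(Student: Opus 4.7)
The plan is to prove the theorem by a standard simulation-based hybrid argument: construct the simulator $\textsf{SIM}$ explicitly and then interpolate between $\textsf{REAL}$ and $\textsf{SIM}$ through a short sequence of games, each pair of which is indistinguishable under an already-assumed hardness property (RLWE for MK-BFV and LWR for the SHPRG). Write $\mathcal{H} = \mathcal{C}_N \setminus \mathcal{V}$ for the honest clients; the hypothesis $|\mathcal{V}\setminus\mathcal{A}| < N-1$ gives $|\mathcal{H}| \geq 2$, so I can fix a distinguished $u^\star \in \mathcal{H}$ that will absorb all of the global consistency constraints determined by the aggregates $z_m$, $z_k$, $z_g$.

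I would first specify $\textsf{SIM}$. In the MSA layer, $\textsf{SIM}$ generates honest RLWE key pairs $\{\textrm{sk}_u,\textrm{pk}_u\}_{u\in\mathcal{H}}$ and the common public key $\textrm{cpk}$ exactly as in the real protocol (none of these depend on the $k_u$'s), replaces each honest ciphertext $\textrm{ct}_u$ by a fresh encryption of $0$ under $\textrm{cpk}$ for $u\in\mathcal{H}\setminus\{u^\star\}$, and encrypts $z_k$ in place of $k_{u^\star}$, so that the aggregated ciphertext decrypts to exactly $k_0 = z_k + \sum_{u\in\mathcal{V}} k_u$, which the adversary can compute anyway. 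The partial key-switch shares $(h_{0,u},h_{1,u})$ for $u\in\mathcal{H}\setminus\{u^\star\}$ are sampled uniformly from $R_q^2$, and $u^\star$'s share is programmed to make the merged re-encryption a valid ciphertext of $k_0$ under $\textrm{sk}_r$. In the HMA layer, $\textsf{SIM}$ samples $y_u$ for $u\in\mathcal{H}\setminus\{u^\star\}$ uniformly in $\mathbb{Z}_P^M$ and fixes $y_{u^\star} = z_m + z_g - \sum_{u\in\mathcal{H}\setminus\{u^\star\}} y_u \bmod P$ (absorbing the bounded SHPRG error into $y_{u^\star}$), so that $\sum_{u\in\mathcal{H}} y_u$ matches what the adversary deduces from its own contributions and the broadcast $y_0$.

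The hybrid chain I would use is: (H1) replace each honest partial key-switch share other than $u^\star$'s by a uniform element of $R_q^2$, indistinguishable because the ephemeral $u_i\leftarrow\chi$ and noises $e_{0,i},e_{1,i}\leftarrow\phi$ inside MKBFV.ParKeySw provide RLWE-style hiding so long as at least one honest secret $s_u$ is unknown to the adversary; (H2) replace each honest MK-BFV ciphertext $\textrm{ct}_u$ for $u\neq u^\star$ by an encryption of $0$, indistinguishable by the IND-CPA security of Compact MK-BFV against up to $N-1$ colluders, which itself reduces to RLWE; (H3) replace the pseudorandom pads $G(k_u)$ for $u\in\mathcal{H}\setminus\{u^\star\}$ by uniformly random vectors in $\mathbb{Z}_p^M$, indistinguishable by LWR-based pseudorandomness of the SHPRG, using the fact that each honest $k_u$ is revealed to the adversary only through the aggregate $k_0$, so when $|\mathcal{H}|\geq 2$ at least one $k_u$ retains sufficient min-entropy. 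After (H3) the distribution of the honest $y_u$'s for $u\neq u^\star$ is exactly uniform and $y_{u^\star}$ is forced by the correct aggregate, matching $\textsf{SIM}$ identically.

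The main obstacle is (H1), because the partial key-switch shares are not independent: their sum, together with the aggregate ciphertext, must produce a well-formed re-encryption that decrypts to the specific value $k_0$ under $\textrm{sk}_r$. One cannot therefore argue that each honest share is pseudorandom in isolation. I would resolve this by exploiting the algebraic shape of a single share $(s_u c_1 + u_u p_0' + e_{0,u},\, u_u p_1' + e_{1,u})$: given that at least one honest $s_u$ and the corresponding $u_u$ are uniformly hidden from the adversary, the joint distribution of all honest shares except one is computationally close to uniform on $(R_q^2)^{|\mathcal{H}|-1}$ by a standard RLWE hybrid, and the last share is then determined uniquely by the required aggregate. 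This is exactly why the condition $|\mathcal{H}|\geq 2$ is tight: it leaves at least one ``free'' direction of hiding in both the MSA re-encryption and the HMA masking, without which the adversary could algebraically recover an individual honest input.
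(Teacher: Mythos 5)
Your proposal is correct and follows essentially the same route as the paper's own proof: a simulation argument in which a distinguished honest client $u^\star$ absorbs the aggregate constraints $z_m, z_k, z_g$, the remaining honest clients' SHPRG masks, MK-BFV ciphertexts, and partial key-switch shares are replaced by uniform (or dummy) values via LWR/RLWE-based hybrids, and the condition $|\mathcal{V}\setminus A|<N-1$ is used exactly as you note to guarantee at least one free honest direction. The only differences are cosmetic — you order the hybrids ciphertexts-before-masks (which is in fact the cleaner ordering, since it removes the encrypted copies of the seeds before invoking PRG security) and you let the simulator run the key-generation round honestly instead of adding the paper's final hybrid that replaces the honest public-key shares by uniform ring elements; neither change affects correctness.
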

\begin{proof}
We prove the theorem by a standard hybrid argument. We 
will present a series of hybrids from variable \textsf{REAL} to \textsf{SIM} where any two subsequent random variables are computationally 
indistinguishable. We assume that $A\in \mathcal{V}$, which indicates the view of the server should be considered. The case where $A$ is not in $\mathcal{V}$ is much easier to prove and is omitted for brevity.
\begin{itemize}
\item[$\textsf{Hyb}_0$] In this hybrid, the variables are distributed exactly as in \textsf{REAL}. We choose a specific client ${u}'$ in $\mathcal{C}_N\setminus \mathcal{V}$. For this client, based on the given $z_m$, $z_k$ and $z_g$, we can write as $y_{{u}'}=m_{{u}'}+G(k_{{u}'})=z_m+z_g-\sum_{u\in\mathcal{C}_N\setminus \mathcal{V}\setminus \{{u}'\}}y_u$, $k_{{u}'}=z_k-\sum_{u\in\mathcal{C}_N\setminus \mathcal{V}\setminus \{{u}'\}}k_u$.
\item[$\textsf{Hyb}_1$] In this hybrid, for a party $u$ in $\mathcal{C}_N\setminus \mathcal{V}\setminus \{u'\}$, in HMA protocol instead of sending $y_u=m_u+G(k_u)$, we send $y_u=m_u+P_u$, where $P_u$ is uniformly random. For ${u}'$, the masked data is still generated by $y_{{u}'}=z_m+z_g-\sum_{u\in\mathcal{C}_N\setminus \mathcal{V}\setminus \{u'\}}y_u$. The security of SHPRG guarantees that the distribution of $\{y_u: u\in\mathcal{C}_N\setminus \mathcal{V}\setminus \{{u}'\}\}$ is identically distributed to the corresponding one in $\textsf{Hyb}_0$. On the other hand, $y_{{u}'}$ is determined by \{${y_u}:u\in \mathcal{C}_N\setminus \mathcal{V}\setminus \{{u}'\}\}$, $z_m$ and $z_g$,  so the distribution of \{${y_u}: u\in \mathcal{C}_N\setminus \mathcal{V}\}$ is identically distributed to that in $\mathsf{Hyb}_0$.
\item[$\textsf{Hyb}_2$] In this hybrid, for party $u$ in $\mathcal{C}_N\setminus \mathcal{V}\setminus \{{u}'\}$, we replace the uploaded data in HMA protocol by $y_u=P_u$, which is possible since $P_u$ was obtained in $\textsf{Hyb}_1$ to be uniformly random, $m_u+P_u$ is also uniformly random. For the chosen client ${u}'$, its uploaded data is still computed by $y_{{u}'}=z_m+z_k-\sum_{u\in\mathcal{C}_N\setminus \mathcal{V}\setminus \{{u}'\}}y_u$,  which makes the joint view of clients in $\mathcal{C}_N\setminus \mathcal{V}$ consistent with the previous one, and the joint distribution of the data uploaded by clients in $\mathcal{C}_N$ stays identical. Hence the joint view of the participants including the server is indistinguishable from the previous hybrid.

\item[$\textsf{Hyb}_3$]  In this hybrid, for a party $u$ in $\mathcal{C}_N\setminus \mathcal{V}\setminus \{u'\}$, in Encryption and Aggregation step of the MSA protocol, instead of sending $\textrm{BFV.Enc}(k_u)$, we send $\textrm{BFV.Enc}({P_u}')$, where ${P_u}'$ is uniformly random. Based on the security of BFV, the ciphertexts are distributed identically. For $u'$, $k_{{u}'}=z_{k}-\sum_{u\in\mathcal{C}_N\setminus \mathcal{V}\setminus \{{u}'\}}P_u'$, which guarantees that the joint distribution of the data uploaded by clients in $\mathcal{C}_N$ stays identical. The masking seeds uploaded by clients in the MSA protocol sum up to $k_0$ staying identical to the previous one. 

\item[$\textsf{Hyb}_4$] In Re-encryption step of the MSA protocol, for a party $v$ in $\mathcal{V}$, send $(r_{0,v}, r_{1,v})=(s_vc_1+u_vp_0'+e_{0,v}, u_vp_1'+e_{1,v})$; for a party $u$ in $\mathcal{C}_N\setminus \mathcal{V}\setminus \{{u}'\}$, replace the uploaded data by $(r_{0,u}, r_{1,u})=(a_{0,u}, a_{1,u})$ where $a_{0,u}, a_{1,u}$ are uniformly random in $R_q$. The usual RLWE assumption suffices the replacement is indistinguishable. For the specific party ${u}'$, shared data $(r_{0,{u}'}, r_{1,{u}'})=(a_{0,{u}'}, a_{1,{u}'})$, and $a_{0,{u}'}, a_{1,{u}'}$ are computed by $ a_{0,{u}'}={c_0}'-c_0-\sum_{u\in{\mathcal{C}_N}\setminus{\mathcal{V}}\setminus \{{u}'\}} {r_{0,u}}-\sum_{v\in\mathcal{V}}{h_{0,v}}$,
$ a_{1,{u}'}={c_1}'-\sum_{u\in{\mathcal{C}_N}\setminus{\mathcal{V}}\setminus \{{u}'\}} {r_{1,u}}-\sum_{v\in\mathcal{V}}{h_{1,v}}$

where $\textrm{ct}_{\textrm{r}}=({c_0}',{c_1}')$ is a public output. By doing this, the distribution of  $\{(r_{0,u}, r_{1,u}): u\in \mathcal{C}_N\}$ is identically distributed to that in $\textsf{Hyb}_3$, and also the simulated key switch shares sum up to $\textrm{ct}_{\textrm{r}}$, making sure the joint view of the outputs of other rounds identical to the previous hybrid.

\item[$\textsf{Hyb}_5$] In this hybrid, in Key Generation round of MSA protocol, cpk=$(p_0,p_1)$ is the output, and its transcript is the tuple $(p_{0,1},p_{0,2},...p_{0,N})$ of all the parties' shares. In our simulator, for a party $v$ in $\mathcal{V}$, $\textrm{pk}_v=(p_{0,v},p_1)$ is generated by $p_{0,v}=-s_v\cdot p_1+e_v$. For a party $u$ in $\mathcal{C}_N\setminus{\mathcal{V}}\setminus \{{u}'\}$, we substitute the locally generated public key $\textrm{pk}_u$ by $(p_{0,u},p_1)$, where $p_{0,u}$ is uniformly random from $R_q$. For client ${u}'$, we compute $p_{0,{u}'}= p_0-\sum_{u\in{\mathcal{C}_N}\setminus{\mathcal{V}}\setminus \{{u}'\}} {p_{0,u}}-\sum_{v\in\mathcal{V}}{p_{0,v}}$ and send $\textrm{pk}_{{u}'}=(p_{0,{u}'},p_1)$ to the server. Here, $\textrm{cpk}=(p_0,p_1)$ is the public output. Thus, the simulated shares sum up to $p_0$ which means the output cpk of this round is equal to the real ones, and the distribution is indistinguishable from that of the previous hybrid. The view of other rounds of MSA protocol results in valid keys and ciphertexts of the BFV scheme, which indeed preserves its semantic security. Also, the final decryption can be performed successfully and output the real result, preventing the adversary from distinguishing it.  
\end{itemize}
\end{proof}

Thus, the PPT simulator \textsf{SIM} that samples from the distribution described in the last hybrid can output computationally indistinguishable from \textsf{REAL}, the distribution can be computed based on $m_{\mathcal{V}},z_m,z_k,z_g$. The joint view of up to $N-2$ clients can be simulated without learning individual data, which means our scheme can preserve the security against the aggregator colluding with an arbitrary subset of up to $N-2$ clients. 

Besides, the masking seeds in the HMA protocol and the encryption keys in the MSA protocol are generated locally or collaboratively computed among clients. Our scheme provides a strong security guarantee without the need for a TTP.

As a side note, the Public Key Switch protocol achieves the goal to release the intermediate global model and trained model to clients only. Although one concern is that the re-encryption keys $(\textrm{sk}_{\textrm{r}}, \textrm{pk}_{\textrm{r}})$ for the PKS protocol in MSA introduces secretly shared information among clients, which may be disclosed to the server in colluding case, it does not violate the security goal because the re-encryption key can only be used to decryption the re-encryption ciphertexts $\textrm{ct}_{\textrm{r}}$. In the Re-encryption step, to collaboratively compute $\textrm{ct}_{\textrm{r}}$, all the clients need to upload the partial key switch share $\textrm{ReEnc}_i(\textrm{ct})$, which is computed upon the aggregated ciphertexts ct, to the server for merging. In honest-but-curious setting, if the colluding adversaries learn $\textrm{sk}_{\textrm{r}}$ to decrypt $\textrm{ct}_{\textrm{r}}$, only the aggregation result will be revealed. Even in the ideal aggregation scheme, as soon as the clients obtain the plain aggregation results, the aggregation results will be disclosed if the server corrupts any client. Thus, the information exposure at this level is not a limitation of PKS protocol. In comparison, for the previous scheme based on conventional single-key HE, the individual model update will be exposed to someone else for all ciphertexts can be decrypted with the same secret key. In colluding cases, the clients may learn other clients' ciphertexts, or the server may learn the secret key, which enables an illegal decryption. DHSA enables the final decryption done by clients, meanwhile avoiding the exposure of individual model updates against colluding. 

In all, DHSA achieves the security goals described in Section 2.4, and provides a stronger privacy guarantee compared with previous solutions, as shown in Tables~\ref{tab1}. 
\begin{sidewaystable}
\sidewaystablefn%
\begin{center}
\begin{minipage}{\textheight}
\caption{Comparison of privacy-preserving approaches in FL systems.}\label{tab1}
\begin{tabular*}{\textheight}{@{\extracolsep{\fill}}lcccccccc@{\extracolsep{\fill}}}
\toprule%
 \multicolumn{3}{c}{}&Double-mask & BatchCrypt & TP & POISONDON & Ours\\
  \cline{1-8}
  &\multicolumn{2}{c}{$T_{col}$\footnotemark[1]}&$\alpha N (\alpha \in (0,1))$&$\times $&$\alpha N (\alpha \in (0,1))$&$N-2$&$N-2$\\
  \cline{2-8}
  \textbf{Security}&\multicolumn{2}{c}{Global model released}&Server&clients&server&querier\footnotemark[2]&clients\\
  \cline{2-8}
  &\multicolumn{2}{c}{Non-assumption of TTP}&\checkmark&\checkmark&$\times$&\checkmark&\checkmark\\
  \cline{1-8}
  &\multirow{2}{*}{\makecell[c]{Communication \\($T$ epochs)}}&Rounds&$4T$&$T$&$4T$&2T+1&$T+3\left \lceil T/\tau \right \rceil$ \\
  \cline{3-8}
  \multirow{3}{*}{\textbf{Efficiency}}&&\centering{\makecell[c]{Traffic \\inflation}}&$1\sim 2$&$>2$& $>100$&$>20$&$1\sim 2$\\
  \cmidrule{2-8}
  &\multicolumn{2}{c}{Computation($T$ epochs)}&$O(MN^2)$&$O(M)$&$O(M)$&$O(M)$&$O(M)$\\
\botrule
\end{tabular*}
\footnotetext[1]{$T_{\textrm{col}}$ of Double-mask and TP methods is determined by the parameterized threshold setting $t=\alpha N$, which means the decryption of the aggregated value needs the query of at least $t$ data parties.}
\footnotetext[2]{A querier in POISONDON can be one of the $N$ parties or an external entity – queries
the model and obtains prediction results on its evaluation data.}
\end{minipage}
\end{center}
\end{sidewaystable}

\subsection{Experimental Settings}
\begin{enumerate}[\textbullet]
\item Benchmarking Models. We implement three representative machine learning applications in FL, and perform plain aggregation and DHSA for each one. Our first application is a CNN model consisting of two convolutional layers with a total of about 0.2M parameters, trained over the FashionMNIST dataset \cite{mnist}. In another application, we train ResNet18 \cite{DBLP:journals/corr/HeZRS15} with 11M parameters on the CIFAR10 dataset \cite{2012Learning}. In the third application, we use Shakespeare dataset \cite{rnn} to train a customized LSTM \cite{hochreiter1997long} with 1.25M parameters. The three applications are based on different types of machine learning models of various sizes, and cover the learning tasks for image classification and text generation. The optimization approach for federated learning is the Federated Averaging algorithm \cite{mcmahan2017communication}. For plain FedAvg aggregation, the model updates are represented by real-valued vectors of 32 bits and uploaded for aggregation without encryption. For DHSA, before being encrypted, the bounded model weights are quantized into 16-bit unsigned integers, i.e. $w=16$. 
\item Homomorphic Model Aggregation Protocol Implementation. In our implementation, we set the baseline setting of parameters of SHPRG used in the HMA protocol as $\mu=512, p=2^{24}, q=2^{54}$, and the LWE evaluator estimates a hardness of over $2^{233}$ \cite{RN76}. Also, $q/p>\mu$, which ensures the LWR problem appears to be exponentially hard for any p=poly($\lambda$) where $\lambda$ is the security parameter \cite{RN74}.  
\item Masking Seed Aggregation Protocol Implementation. We set the parameters of the MK-BFV as $n=2^{12}, \log_2q=109, \log_2t=64$ which is with 192-bit security \cite{albrecht2021homomorphic}. The 64-bit $t$ (packing-compatible) can cover the computation domain of the masking seed agreement process. For the running time examination, we set the number of the masking seed pairs agreed during per execution of MSA $\tau$ as 100. Note that we can select larger $\tau$ to gain higher efficiency.

All experiments are run in a Lenovo server with the configuration of Ubuntu 20.04, Intel(R) Core i7-10700K@3.80GHz CPU$\times$16 and NVIDIA GeForce RTX2080 SUPER. We implement DHSA in Go, which builds on top of Lattigo \cite{lattigo}, an open-source Go library for lattice-based cryptography.
\end{enumerate}

\subsection{Efficiency of DHSA}
We evaluate the efficiency from the aspect of communication and computation overheads, and compare the results with two state of the art methods, BatchCrypt \cite{RN48} and POISONDON \cite{DBLP:conf/ndss/SavPTFBSH21}. BatchCrypt optimizes the efficiency of HE-based secure aggregation solution for cross-silo FL, and POISONDON utilizes multi-key CKKS which provides the same security guarantee with our method. We set the parameters of BatchCrypt the same as that in their paper, and set the the parameter of POSEIDON as $n=2^{12}, \log_2q=109, \sigma = 3.2$, which ensures 128 bits security without too much redundancy in computation and communication. All calculations below assume a single server and $N$ clients, where each client holds a model whose size is $M$.
\subsubsection{Computation Efficiency}

We first state the analysis of the asymptotic computation efficiency of each protocol, and then conduct it with the running time of practical execution experimentally. We measure the running time of two protocols separately and the total running time of computation for one single epoch. We execute the tests 100 times and take the average. 

In the HMA protocol, the computation cost is mainly derived from computing SHPRGs to generate masks for each entry in the model update vector. The computation costs for each client and the server are $O(M)$. We then test the running time for different model sizes. Since the selection of parameters influences the security, the number of clients the system can handle, and also the efficiency, we test the running time under different setting of parameters of SHPRG. Synthesized vectors are used for locally trained models whose elements are encoded to 16-bit unsigned integers, and the local training time is not included in the total running time. The selected parameter settings and the corresponding running time of each client are shown in Tables~\ref{tab2}. The computation overhead here is linear related to the model size, independent of the number of clients. The running time of the server of HMA is negligible and almost the same as plain FedAvg, for simple aggregation is performed on masked model data whose size stay similar to the raw model data.

The involved parameters of the HMA protocol include $\mu, p, q$. We can see from Tables~\ref{tab2} that when we increase $\mu$, the security level will be enhanced. However, the computation overhead becomes larger. In addition, larger $p$ makes the system able to handle more clients while resulting in higher communication overhead. Given $\mu$, either increasing $p$ or decreasing $q$ will lower the security level. In this paper, we select the most moderate set of parameters to conduct the following efficiency evaluation. We set the parameter to ensure sufficient security guarantee and avoid overflowing when the number of clients is below the threshold. 

\begin{table}[h]
\begin{center}
\caption{Running time of the HMA protocol with different model sizes under different settings of SHPRG parameters.}\label{tab2}%
\begin{tabular}{@{}ccccccccc@{}}
\toprule
& \multicolumn{3}{@{}c@{}}{parameters of SHPRG}&\multirow{2}{*}{security}&\multirow{2}{*}{\makecell[c]{the maximum \\number of clients}}&\multicolumn{3}{@{}c@{}}{\makecell[c]{running time for\\ different model sizes (ms)}}\\
\cmidrule{2-4} \cmidrule{7-9}
&$\mu$&$p$&$q$&&&10K&100K&1M \\
\midrule
Setting A & 512 & $2^{24}$&$2^{54}$&$2^{233}$&256&6&62&624\\
\midrule
Setting B & 512 & $2^{32}$&$2^{64}$&$2^{128}$&65536&6&62&624\\
\midrule
Setting C & 256 & $2^{24}$&$2^{72}$&$2^{132}$&256&3&33&328\\
\midrule
Setting D & 1024 & $2^{32}$&$2^{48}$&$2^{244}$&65536&13&129&1291\\
\botrule
\end{tabular}
\end{center}
\end{table}
In the MSA protocol, MK-BFV is utilized to encrypt $\tau$ masking seeds, where the length of each one is $\mu$, and the computation cost of each client is composed of four components: generating encryption keys, encrypting the messages, computing the partial re-encryption share and decryption. For the server, the computation cost can be broken down into evaluating the ciphertexts (addition operation in our protocol) and merging the partial re-encryption shares.  The running time of the MSA protocol for each phase is evaluated, and the results of each step are listed in Tables~\ref{tab3}. Note that the computation overhead of clients here does not depend on the amount of the model parameters and the number of clients. Due to the small data size needed to be exchanged, the overhead is relatively small. What’s more, the MSA protocol runs once per $\tau$ epochs during the overall training process, whose average overhead is negligible. It reduces the number of communication rounds, and can make full use of BFV’s packing technology to improve efficiency.

\begin{table}[h]
\begin{center}
\begin{minipage}{250pt}
\caption{Running time of the MSA protocol (ms).}\label{tab3}
\begin{tabular}{@{} cccccc@{} }
\toprule
&KeyGen&Enc\&Agg&PubKeySwi&Final Dec&Total\\
\midrule
client&0.27&10.79&11.96&3.64&26.66\\
\midrule
server&0.07&0.8&1.7&--&2.57\\
\botrule
\end{tabular}
\end{minipage}
\end{center}
\end{table}
To evaluate the total computation overhead resulting from DHSA, we compare the running time of computation for an integral iteration with plain FedAvg, BatchCrypt and POSEIDON. In the plain FedAvg, the learning process without secure aggregation, clients train the model for $t_{local}$ epochs before the plain model updates are aggregated by the server party. The running time here includes the time of both local training and aggregation. For other secure aggregation methods, an integral iteration includes $t_{local}$ epochs of training, encryption of the model update, aggregation of ciphertexts, and decryption of the result. The total running time is evaluated. We set $N=10, \tau=100, t_{local}=10$ for the FL setting applied for 2-layer CNN, ResNet18 and LSTM. As the results visualized in Figure~\ref{FIG:5} show, our DHSA reduces the computation overhead significantly. The extra running time over the original FedAvg owing to DHSA is insignificant. Compared with BatchCrypt, DHSA provides up to 20× speedup for aggregation. Specifically, when the size of the machine learning model increases, the computation overhead of DHSA increases more gently. DHSA gains a greater speedup over the baseline when the model size gets larger.  
\subsubsection{Communication Overhead}

The communication traffic mainly comes from the HMA protocol, where the masked models of size $M\log_2p$ are uploaded. The communication cost is $O(M)$ for each client and $O(MN)$ for the server, equal to the plain learning of FL.  

For the MSA protocol, the inputs are the masking seeds whose length is parameterized, resulting in a constant communication overhead depending on the selected parameters of MK-BFV. Based on parameters $n$ and $q$ we select for MK-BFV, the public key and ciphertext size is about $2n\log_2q$. The overhead includes three rounds of communication. The common public key is agreed in the first round, where each client uploads the individual public key 
whose size is about $2n\log_2q$, and the server broadcasts the common public key of the same size. The information exchanged in the second round is the uploaded ciphertexts and the sum of the ciphertexts broadcast. For each client, the message that needs to be encrypted is the masking seed, a private vector of dimension $\mu$. For $\tau$ pairs of masking seeds, the number of total messages is $\mu\tau$. With the packing encryption method, the messages can be encoded into $\left \lceil \mu\tau/n \right \rceil$ plaintexts, and the size of corresponding ciphertexts for each client is $2n\left \lceil \mu\tau/n \right \rceil\log_2q$. In the third round, clients upload the partial re-encryption share, and the re-encryption ciphertexts are returned, both with the size of $2n\left \lceil \mu\tau/n \right \rceil\log_2q$ approximately.  

Figure~\ref{FIG:6} depicts the communication comparison in terms of traffic volume. We see that our scheme reduces the amount of communication traffic compared with BatchCrypt and POSEIDON, and the traffic inflation factor is reduced to approximately 1.5. We define the traffic inflation factor as the ratio of communication traffic (between each client and the server) of secure aggregation and plain FedAvg. Although POSEIDON inproves the computation overhead compared with BatchCrypt, the communication overhead is the bottleneck. In addition, we stress that we can further cut down the communication traffic by adjusting the parameters of SHPRG, e.g. smaller $p$ which tolerates fewer participating clients. For FL system with 10 clients, if we set $p=2^{20}$, the inflation factor can be reduced to 1.25.
\begin{figure}
	\centering
	\includegraphics[scale=0.4]{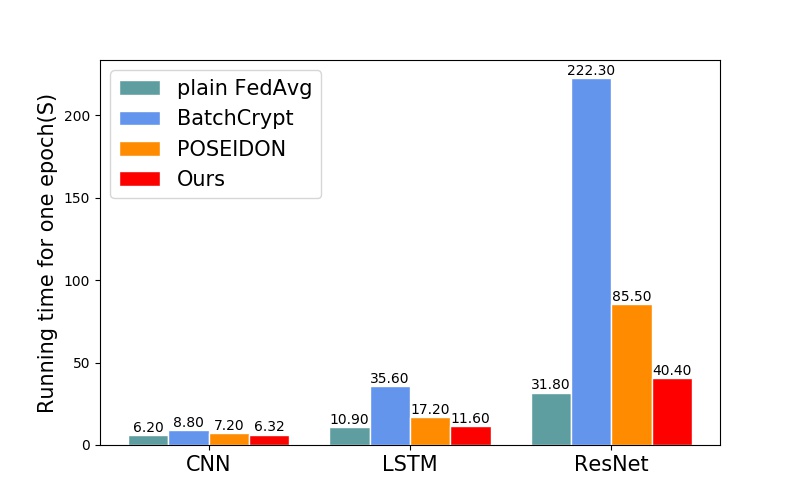}
	\caption{Total running time of computation for one epoch.}
	\label{FIG:5}
\end{figure}

\begin{figure}
	\centering
	\includegraphics[scale=0.4]{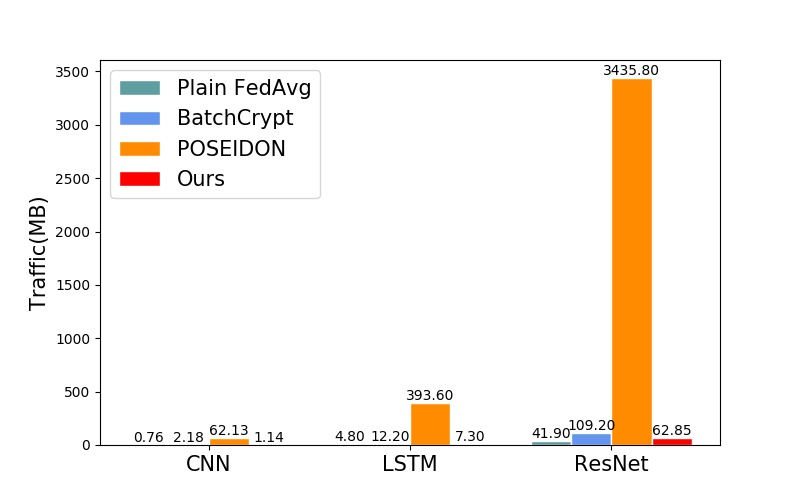}
	\caption{The Communication Overhead.}
	\label{FIG:6}
\end{figure}

\subsection{Quality of Trained Model}
For our scheme, the model updates have two sources of error: (1) the model parameters are quantized into 16-bit integers before masking, and corresponding dequantization is done after 
aggregation; (2) SHPRG induces an error term to aggregated 
model parameters. To measure the model quality, we track the test accuracy for CNN and ResNet18. Training loss is used for LSTM
as the dataset is unlabelled and has no test set. As Figure~\ref{fig:res} shows, for one thing, compared with plain FedAvg, the convergence achieves after training for the same epochs, which means the speed of convergence is not affected. For another thing, the trained models obtained by our scheme reach the same peak accuracy or bottom loss as the plain FedAvg.

\begin{figure}[H]

\begin{tabular}{ccc}

\begin{minipage}{0.3\linewidth}

  \centerline{\includegraphics[width=3.5cm]{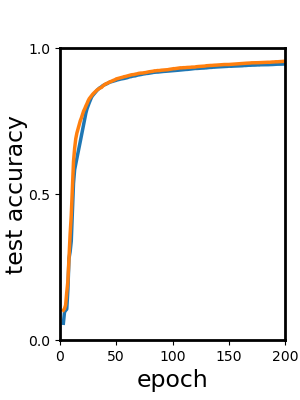}}

  \centerline{(a) CNN}

\end{minipage}

\hfill

\begin{minipage}{0.3\linewidth}

  \centerline{\includegraphics[width=3.5cm]{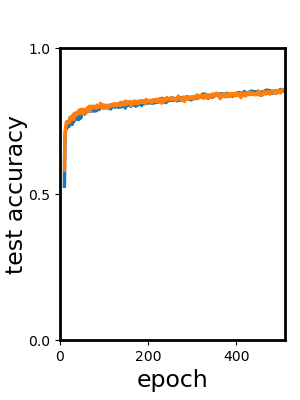}}

  \centerline{(b) ResNet}

\end{minipage}

\hfill

\begin{minipage}{0.3\linewidth}

  \centerline{\includegraphics[width=3.5cm]{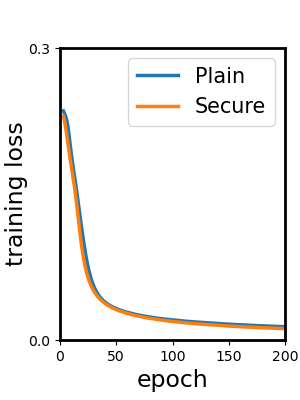}}

  \centerline{(c) LSTM}

\end{minipage}

\end{tabular}

\caption{The quality of trained model.}

\label{fig:res}

\end{figure}

\section{Conclusion}
This paper presents a doubly homomorphic secure aggregation scheme for cross-silo FL settings. We present the application of MKHE which can achieve the stringent security goal of cross-silo FL against colluding parties at a maximum degree. To overcome the bottleneck of high overhead, we utilize SHPRG to reduce the volume of data that needs to be computed securely via MKHE. Overall, we construct a practical privacy-preserving aggregation solution combining the SHPRG-based Homomorphic Model Aggregation protocol and MK-BFV-based Masking Seed Agreement protocol, which is demonstrated to achieve the requirements of a practical and secure cross-silo FL system from security, efficiency and accuracy aspects. 

The privacy security provided by our scheme is robust to the server colluding with up to $N-2$ clients, which can be reached without the need for TTP. Our solution improves the computation efficiency up to 20× over baseline experimentally, and the communication overhead is reduced significantly meanwhile. Additionally, the security and efficiency are provided at no cost of the accuracy of the global model, enabling the practicality for industry deployment.

\bmhead{Acknowledgments}

This paper is supported in part by the National Key
Research and Development Program of China under grant
No.2020YFB1600201, National Natural Science Founda-
tion of China (NSFC) under grant No.(U20A20202, 62090024,
61876173), and Youth Innovation Promotion Association
CAS.

\bmhead{Data Availability Statement}
The datasets generated during and/or analysed during the current study are available from the corresponding author on reasonable request.
\bmhead{Declaration of competing interest}
The authors declare that they have no known competing financial interests or personal relationships that could have appeared to influence the work reported in this article.

\begin{appendices}

\section{Ring Learning With Errors}
For a power-of-two integer $n$ and $R=\mathbb{Z}[X]/(X^n+1)$, define $R_q=R/(q\cdot R)$ as the residue ring of $R$ modulo an integer $q$. The Ring Learning with Errors(RLWE) distribution consists of tuples $(a_i, b_i=s\cdot a_i+e_i)\in R_q^2$, where $s$ is a fixed secret chosen from the key distribution $\chi $ over $R$, $a_i$ is uniformly random in $R_q$, and $e_i$ is an error term drawn from the error distribution $\psi$ over $R_q$. The search RLWE problem states that, given many samples of the form $(a_i, b_i=s\cdot a_i+e_i)\in R_q^2$, it is computationally infeasible to compute the secret $s$. 
\section{BFV}
Here, we detail the  common instantiation of  the basic Brakerski-Fan-\\Vercauteren (BFV) scheme where the ciphertext space is $R_q$, and the plaintext space is the ring $R_t$ for $t<q$ with $\Delta=\left\lfloor q/t \right\rfloor$. The implemention consists of a tuple of algorithms(KeyGen, Enc, Dec, Eval) as below:
\begin{enumerate}
\item[\textbullet]Setup: $pp \leftarrow$ Setup($1^\lambda$): For a given security parameter $\lambda$, set the RLWE dimension $n$, ciphertext modulus $q$, key distribution $\chi$ and error distribution $\psi$. Generate a random vector $a\leftarrow U(R_q)$. Setup($1^\lambda$) returns the public parameter $pp=(n,q,\chi,\psi,a)$. 
\item[\textbullet] Key Generation: $\{\textrm{sk},\textrm{pk}\}\leftarrow$ KeyGen($pp$): Given the public parameter $pp$, KeyGen($pp$) outputs the secret key sk and the public key pk. The secret key is sampled randomly, which is $\textrm{sk}=s\leftarrow \chi$. The public key is set as $\textrm{pk}=(b,a)$, where for the sampled error vector $e\leftarrow \psi$, $b=-s\cdot a+e(\mod q) \in R_q$.
\item[\textbullet] Encryption: ct $\leftarrow$ BFV.Enc($\textrm{pk}, m$): For massage $m\in R_t$, BFV.Enc encrypts it as $\textrm{ct}=(\Delta m+ub+e_0, ua+e_1)$, where $u$ is randomly sampled from $\chi$ and $e_0,e_1$ are sampled from $\psi$. 
\item[\textbullet] Decryption: $m\leftarrow$ BFV.Dec(sk,ct): Taking the secret key sk=$s$ and a ciphertext ct=($c_0, c_1$) as input, BFV.Dec computes $m=\left[\left\lfloor \frac{t}{q}[c_0+c_1s]_q\right\rceil\right]_t$ which is the plaintext corresponding to ct.
\item[\textbullet] Evaluation: ${\textrm{ct}}'\leftarrow$BFV.Eval(pk, $\textrm{ct}_1$, $\textrm{ct}_2$, $f$): Given the ciphertexts $\textrm{ct}_1$, $\textrm{ct}_2$ corresponding to public key pk, as well as the funtion $f$, BFV.Val outputs the ciphertext ${\textrm{ct}}'$ such that BFV.Dec(sk, ${\textrm{ct}}'$)=$f(m_1,m_2)$, where $\textrm{ct}_i$=BFV.Enc(pk, $m_i$).
\end{enumerate}

\section{Multi-key BFV Based On Ciphertexts Extension}
A Multi-key BFV based on ciphertexts extension is another method to  handle
homomorphic computations on ciphertexts under independently generated secret keys. Different from the compact MKBFV, ciphertexts of this scheme are associated to $k$ different parties. The ciphertext is of the form $\overline{\textrm{ct}}=(c_0,c_1,...,c_k)\in R_q^{k+1}$ for a modulus $q$, which is decryptable by the concatenated secret key $\overline{\textrm{sk}}=(1,s_1,...,s_k)$. To achieve the purpose, a key step is the common pre-processing when performing a homomorphic operation between ciphertexts. For given ciphertexts $\textrm{ct}_i=(c_{0,i},c_{1,i})\in R_q^2$ of client $i$, the extended ciphertexts corresponding to the tuple of parties $ (1,2,...N)$ are $\overline{\textrm{ct}_i}=(c_{0,i}^*, c_{1,i}^*,...,c_{N,i}^*)\in R_q^{N+1}$, where $c_{0,i}^* =c_{0,i},c_{j,i}^*=\delta_{ij}c_{1,i}$, and $\delta_{ij}=\left\{\begin{matrix}
1,&\textrm{if }j=i\phantom{-}\\ 
0,&\textrm{otherwise}
\end{matrix}\right.$. Thus, $\overline{\textrm{ct}_i}$ can be decrypted with the joint secret key $\overline{\textrm{sk}}=(1,s_1,...s_N)$. 
For a set of $N$ parties $\mathcal{P}$, this version of MKBFV consists of five PPT algorithms (Setup, KeyGen, Enc, Dec, Eval).
\begin{enumerate}
\item[\textbullet] Setup: $pp \leftarrow \textrm{MKBFV.Setup}(\lambda, \kappa)$. Taking the security and homomorphic capacity parameters as inputs, MKBFV.Setup outputs the public
parameter $pp=\{n,q,\chi,\psi,a\}$. 
\item[\textbullet] Key Generation: $\{\textrm{sk}_i, \textrm{pk}_i\}\leftarrow\textrm{MKBFV.KeyGen}(pp)$. Each party $P_i\in \mathcal{P}$ generates secret and public keys $\{\textrm{sk}, \textrm{pk}\}$ following BFV.KeyGen($pp$). 
\item[\textbullet] Encryption:  ct$_i\leftarrow$MKBFV.Enc($\textrm{pk}_i$, $x_i$). The usual encryption calculation of BFV is used to encrypt message under sk$_i$ to return $\textrm{ct}_i=\textrm{BFV.Enc(pk}_i, x_i)\in R_q^2$.
\item[\textbullet] Evaluation: $\overline{{\textrm{ct}}'}\leftarrow \textrm{MKBFV.Eval}(F,(\textrm{ct}_1,\textrm{ct}_2,...,\textrm{ct}_N), \{\textrm{pk}_i\}_{i\in \mathcal{P}})$. Given a funcion $F$, a tuple of ciphertexts $\textrm{ct}_i=\textrm{BFV.Enc(pk}_i, x_i)=(c_{0,i}, c_{1,i})\in R_q^2$ and the corresponding set of public keys $\{\textrm{pk}_i\}_{i\in \mathcal{P}}$, MKBFV.Eval first extends each ciphertexts $\textrm{ct}_i$ to $\overline{\textrm{ct}_i}\in R_q^{N+1}$ on the joint secret key of set $\mathcal{P}$. Then the arithmetic $F$ is performed on the extended ciphertexts to return $\overline{{\textrm{ct}}'}\in R_q^{N+1} $ . 
\item[\textbullet] Decryption: $x\leftarrow$ MKBFV.Dec($\overline{\textrm{ct}}$, $\{\textrm{sk}_i\}_{i\in \mathcal{P}}$). Given a ciphertext $\overline{\textrm{ct}}$ encrypting $x$ and the corresponding sequence of secret key, MKBFV.Dec outputs the plaintext $x$ by calculating $\left \langle \overline{\textrm{ct}},(1,s_1,...,s_N) \right \rangle$,  where we denote $\left \langle  u,v \right \rangle$ as the usual dot product of two vectors $u,v$.

\end{enumerate} 
\end{appendices}


\bibliography{sn-bibliography}


\begin{thebibliography}{52}
\ifx \bisbn   \undefined \def \bisbn  #1{ISBN #1}\fi
\ifx \binits  \undefined \def \binits#1{#1}\fi
\ifx \bauthor  \undefined \def \bauthor#1{#1}\fi
\ifx \batitle  \undefined \def \batitle#1{#1}\fi
\ifx \bjtitle  \undefined \def \bjtitle#1{#1}\fi
\ifx \bvolume  \undefined \def \bvolume#1{\textbf{#1}}\fi
\ifx \byear  \undefined \def \byear#1{#1}\fi
\ifx \bissue  \undefined \def \bissue#1{#1}\fi
\ifx \bfpage  \undefined \def \bfpage#1{#1}\fi
\ifx \blpage  \undefined \def \blpage #1{#1}\fi
\ifx \burl  \undefined \def \burl#1{\textsf{#1}}\fi
\ifx \doiurl  \undefined \def \doiurl#1{\url{https://doi.org/#1}}\fi
\ifx \betal  \undefined \def \betal{\textit{et al.}}\fi
\ifx \binstitute  \undefined \def \binstitute#1{#1}\fi
\ifx \binstitutionaled  \undefined \def \binstitutionaled#1{#1}\fi
\ifx \bctitle  \undefined \def \bctitle#1{#1}\fi
\ifx \beditor  \undefined \def \beditor#1{#1}\fi
\ifx \bpublisher  \undefined \def \bpublisher#1{#1}\fi
\ifx \bbtitle  \undefined \def \bbtitle#1{#1}\fi
\ifx \bedition  \undefined \def \bedition#1{#1}\fi
\ifx \bseriesno  \undefined \def \bseriesno#1{#1}\fi
\ifx \blocation  \undefined \def \blocation#1{#1}\fi
\ifx \bsertitle  \undefined \def \bsertitle#1{#1}\fi
\ifx \bsnm \undefined \def \bsnm#1{#1}\fi
\ifx \bsuffix \undefined \def \bsuffix#1{#1}\fi
\ifx \bparticle \undefined \def \bparticle#1{#1}\fi
\ifx \barticle \undefined \def \barticle#1{#1}\fi
\bibcommenthead
\ifx \bconfdate \undefined \def \bconfdate #1{#1}\fi
\ifx \botherref \undefined \def \botherref #1{#1}\fi
\ifx \url \undefined \def \url#1{\textsf{#1}}\fi
\ifx \bchapter \undefined \def \bchapter#1{#1}\fi
\ifx \bbook \undefined \def \bbook#1{#1}\fi
\ifx \bcomment \undefined \def \bcomment#1{#1}\fi
\ifx \oauthor \undefined \def \oauthor#1{#1}\fi
\ifx \citeauthoryear \undefined \def \citeauthoryear#1{#1}\fi
\ifx \endbibitem  \undefined \def \endbibitem {}\fi
\ifx \bconflocation  \undefined \def \bconflocation#1{#1}\fi
\ifx \arxivurl  \undefined \def \arxivurl#1{\textsf{#1}}\fi
\csname PreBibitemsHook\endcsname

\bibitem{RN69}
\begin{botherref}
\oauthor{\bsnm{Hard}, \binits{A.}},
\oauthor{\bsnm{Rao}, \binits{K.}},
\oauthor{\bsnm{Mathews}, \binits{R.}},
\oauthor{\bsnm{Ramaswamy}, \binits{S.}},
\oauthor{\bsnm{Beaufays}, \binits{F.}},
\oauthor{\bsnm{Augenstein}, \binits{S.}},
\oauthor{\bsnm{Eichner}, \binits{H.}},
\oauthor{\bsnm{Kiddon}, \binits{C.}},
\oauthor{\bsnm{Ramage}, \binits{D.J.a.p.a..}}:
Federated learning for mobile keyboard prediction
(2018)
\end{botherref}
\endbibitem

\bibitem{RN31}
\begin{barticle}
\bauthor{\bsnm{Yang}, \binits{Q.}},
\bauthor{\bsnm{Liu}, \binits{Y.}},
\bauthor{\bsnm{Chen}, \binits{T.}},
\bauthor{\bsnm{Tong}, \binits{Y.}}:
\batitle{Federated machine learning: Concept and applications}.
\bjtitle{ACM Transactions on Intelligent Systems and Technology}
\bvolume{10}(\bissue{2}),
\bfpage{1}--\blpage{19}
(\byear{2019})
\end{barticle}
\endbibitem

\bibitem{Nasr2018ComprehensivePA}
\begin{botherref}
\oauthor{\bsnm{Nasr}, \binits{M.}},
\oauthor{\bsnm{Shokri}, \binits{R.}},
\oauthor{\bsnm{Houmansadr}, \binits{A.}}:
Comprehensive privacy analysis of deep learning: Stand-alone and federated
  learning under passive and active white-box inference attacks.
ArXiv
\textbf{abs/1812.00910}
(2018)
\end{botherref}
\endbibitem

\bibitem{8835245}
\begin{bchapter}
\bauthor{\bsnm{Nasr}, \binits{M.}},
\bauthor{\bsnm{Shokri}, \binits{R.}},
\bauthor{\bsnm{Houmansadr}, \binits{A.}}:
\bctitle{Comprehensive privacy analysis of deep learning: Passive and active
  white-box inference attacks against centralized and federated learning}.
In: \bbtitle{2019 IEEE Symposium on Security and Privacy (SP)},
pp. \bfpage{739}--\blpage{753}
(\byear{2019})
\end{bchapter}
\endbibitem

\bibitem{9148790}
\begin{bchapter}
\bauthor{\bsnm{Zhang}, \binits{J.}},
\bauthor{\bsnm{Zhang}, \binits{J.}},
\bauthor{\bsnm{Chen}, \binits{J.}},
\bauthor{\bsnm{Yu}, \binits{S.}}:
\bctitle{Gan enhanced membership inference: A passive local attack in federated
  learning}.
In: \bbtitle{ICC 2020 - 2020 IEEE International Conference on Communications
  (ICC)},
pp. \bfpage{1}--\blpage{6}
(\byear{2020})
\end{bchapter}
\endbibitem

\bibitem{274683}
\begin{bchapter}
\bauthor{\bsnm{Zhang}, \binits{W.}},
\bauthor{\bsnm{Tople}, \binits{S.}},
\bauthor{\bsnm{Ohrimenko}, \binits{O.}}:
\bctitle{Leakage of dataset properties in {Multi-Party} machine learning}.
In: \bbtitle{30th USENIX Security Symposium (USENIX Security 21)},
pp. \bfpage{2687}--\blpage{2704}
(\byear{2021})
\end{bchapter}
\endbibitem

\bibitem{RN34}
\begin{bbook}
\bauthor{\bsnm{Zhu}, \binits{L.}},
\bauthor{\bsnm{Han}, \binits{S.}}:
\bbtitle{Deep Leakage from Gradients}.
\bsertitle{Federated learning},
pp. \bfpage{17}--\blpage{31}
(\byear{2020})
\end{bbook}
\endbibitem

\bibitem{RN44}
\begin{botherref}
\oauthor{\bsnm{Bonawitz}, \binits{K.}},
\oauthor{\bsnm{Ivanov}, \binits{V.}},
\oauthor{\bsnm{Kreuter}, \binits{B.}},
\oauthor{\bsnm{Marcedone}, \binits{A.}},
\oauthor{\bsnm{McMahan}, \binits{H.B.}},
\oauthor{\bsnm{Patel}, \binits{S.}},
\oauthor{\bsnm{Ramage}, \binits{D.}},
\oauthor{\bsnm{Segal}, \binits{A.}},
\oauthor{\bsnm{Seth}, \binits{K.}}:
Practical secure aggregation for privacy-preserving machine learning.
In: CCS,
pp. 1175--1191.
ACM
\end{botherref}
\endbibitem

\bibitem{RN167}
\begin{botherref}
\oauthor{\bsnm{So}, \binits{J.}},
\oauthor{\bsnm{Guler}, \binits{B.}},
\oauthor{\bsnm{Avestimehr}, \binits{A.S.}}:
Turbo-aggregate: Breaking the quadratic aggregation barrier in secure federated
  learning.
IEEE JSAIT
(2020)
\end{botherref}
\endbibitem

\bibitem{RN168}
\begin{botherref}
\oauthor{\bsnm{Kadhe}, \binits{S.}},
\oauthor{\bsnm{Rajaraman}, \binits{N.}},
\oauthor{\bsnm{Koyluoglu}, \binits{O.O.}},
\oauthor{\bsnm{Ramchandran}, \binits{K.}}:
Fastsecagg: Scalable secure aggregation for privacy-preserving federated
  learning.
arXiv preprint arXiv:2009.11248
(2020)
\end{botherref}
\endbibitem

\bibitem{RN169}
\begin{bchapter}
\bauthor{\bsnm{Bell}, \binits{J.H.}},
\bauthor{\bsnm{Bonawitz}, \binits{K.A.}},
\bauthor{\bsnm{Gasc\'{o}n}, \binits{A.}},
\bauthor{\bsnm{Lepoint}, \binits{T.}},
\bauthor{\bsnm{Raykova}, \binits{M.}}:
\bctitle{Secure single-server aggregation with (poly)logarithmic overhead}.
\bsertitle{CCS '20},
pp. \bfpage{1253}--\blpage{1269}
(\byear{2020})
\end{bchapter}
\endbibitem

\bibitem{RN170}
\begin{botherref}
\oauthor{\bsnm{Damgård}, \binits{I.}},
\oauthor{\bsnm{Keller}, \binits{M.}},
\oauthor{\bsnm{Larraia}, \binits{E.}},
\oauthor{\bsnm{Pastro}, \binits{V.}},
\oauthor{\bsnm{Scholl}, \binits{P.}},
\oauthor{\bsnm{Smart}, \binits{N.P.}}:
Practical covertly secure mpc for dishonest majority–or: breaking the spdz
  limits.
In: ESORICS,
pp. 1--18.
Springer
\end{botherref}
\endbibitem

\bibitem{RN48}
\begin{botherref}
\oauthor{\bsnm{Zhang}, \binits{C.}},
\oauthor{\bsnm{Li}, \binits{S.}},
\oauthor{\bsnm{Xia}, \binits{J.}},
\oauthor{\bsnm{Wang}, \binits{W.}},
\oauthor{\bsnm{Yan}, \binits{F.}},
\oauthor{\bsnm{Liu}, \binits{Y.}}:
Batchcrypt: Efficient homomorphic encryption for cross-silo federated learning.
In: USENIX ATC,
pp. 493--506
\end{botherref}
\endbibitem

\bibitem{RN47}
\begin{barticle}
\bauthor{\bsnm{Aono}, \binits{Y.}},
\bauthor{\bsnm{Hayashi}, \binits{T.}},
\bauthor{\bsnm{Wang}, \binits{L.}},
\bauthor{\bsnm{Moriai}, \binits{S.}}:
\batitle{Privacy-preserving deep learning via additively homomorphic
  encryption}.
\bjtitle{IEEE Transactions on Information Forensics Security}
\bvolume{13}(\bissue{5}),
\bfpage{1333}--\blpage{1345}
(\byear{2017})
\end{barticle}
\endbibitem

\bibitem{RN27}
\begin{bchapter}
\bauthor{\bsnm{Truex}, \binits{S.}},
\bauthor{\bsnm{Baracaldo}, \binits{N.}},
\bauthor{\bsnm{Anwar}, \binits{A.}},
\bauthor{\bsnm{Steinke}, \binits{T.}},
\bauthor{\bsnm{Ludwig}, \binits{H.}},
\bauthor{\bsnm{Zhang}, \binits{R.}},
\bauthor{\bsnm{Zhou}, \binits{Y.}}:
\bctitle{A hybrid approach to privacy-preserving federated learning}.
In: \bbtitle{AISec},
pp. \bfpage{1}--\blpage{11}
(\byear{2019})
\end{bchapter}
\endbibitem

\bibitem{RN71}
\begin{botherref}
\oauthor{\bsnm{Xu}, \binits{R.}},
\oauthor{\bsnm{Baracaldo}, \binits{N.}},
\oauthor{\bsnm{Zhou}, \binits{Y.}},
\oauthor{\bsnm{Anwar}, \binits{A.}},
\oauthor{\bsnm{Ludwig}, \binits{H.}}:
Hybridalpha: An efficient approach for privacy-preserving federated learning.
In: Proceedings of the 12th ACM Workshop on Artificial Intelligence and
  Security,
pp. 13--23
\end{botherref}
\endbibitem

\bibitem{DBLP:conf/ndss/SavPTFBSH21}
\begin{bchapter}
\bauthor{\bsnm{Sav}, \binits{S.}},
\bauthor{\bsnm{Pyrgelis}, \binits{A.}},
\bauthor{\bsnm{Troncoso{-}Pastoriza}, \binits{J.R.}},
\bauthor{\bsnm{Froelicher}, \binits{D.}},
\bauthor{\bsnm{Bossuat}, \binits{J.}},
\bauthor{\bsnm{Sousa}, \binits{J.S.}},
\bauthor{\bsnm{Hubaux}, \binits{J.}}:
\bctitle{{POSEIDON:} privacy-preserving federated neural network learning}.
In: \bbtitle{28th Annual Network and Distributed System Security Symposium,
  {NDSS} 2021, Virtually, February 21-25, 2021}
(\byear{2021})
\end{bchapter}
\endbibitem

\bibitem{RN49}
\begin{botherref}
\oauthor{\bsnm{Geyer}, \binits{R.C.}},
\oauthor{\bsnm{Klein}, \binits{T.}},
\oauthor{\bsnm{Nabi}, \binits{M.}}:
Differentially private federated learning: A client level perspective.
arXiv preprint arXiv:.07557
(2017)
\end{botherref}
\endbibitem

\bibitem{RN77}
\begin{botherref}
\oauthor{\bsnm{Abadi}, \binits{M.}},
\oauthor{\bsnm{Chu}, \binits{A.}},
\oauthor{\bsnm{Goodfellow}, \binits{I.}},
\oauthor{\bsnm{McMahan}, \binits{H.B.}},
\oauthor{\bsnm{Mironov}, \binits{I.}},
\oauthor{\bsnm{Talwar}, \binits{K.}},
\oauthor{\bsnm{Zhang}, \binits{L.}}:
Deep learning with differential privacy.
In: Proceedings of the 2016 ACM SIGSAC Conference on Computer and
  Communications Security,
pp. 308--318
\end{botherref}
\endbibitem

\bibitem{RN68}
\begin{barticle}
\bauthor{\bsnm{Kairouz}, \binits{P.}},
\bauthor{\bsnm{McMahan}, \binits{H.B.}},
\bauthor{\bsnm{Avent}, \binits{B.}},
\bauthor{\bsnm{Bellet}, \binits{A.}},
\bauthor{\bsnm{Bennis}, \binits{M.}},
\bauthor{\bsnm{Bhagoji}, \binits{A.N.}},
\bauthor{\bsnm{Bonawitz}, \binits{K.}},
\bauthor{\bsnm{Charles}, \binits{Z.}},
\bauthor{\bsnm{Cormode}, \binits{G.}},
\bauthor{\bsnm{Cummings}, \binits{R.}}, \betal:
\batitle{Advances and open problems in federated learning}.
\bjtitle{Foundations and Trends{\textregistered} in Machine Learning}
\bvolume{14}(\bissue{1--2}),
\bfpage{1}--\blpage{210}
(\byear{2021})
\end{barticle}
\endbibitem

\bibitem{RN59}
\begin{botherref}
\oauthor{\bsnm{Sheller}, \binits{M.J.}},
\oauthor{\bsnm{Edwards}, \binits{B.}},
\oauthor{\bsnm{Reina}, \binits{G.A.}},
\oauthor{\bsnm{Martin}, \binits{J.}},
\oauthor{\bsnm{Pati}, \binits{S.}},
\oauthor{\bsnm{Kotrotsou}, \binits{A.}},
\oauthor{\bsnm{Milchenko}, \binits{M.}},
\oauthor{\bsnm{Xu}, \binits{W.}},
\oauthor{\bsnm{Marcus}, \binits{D.}},
\oauthor{\bsnm{Colen}, \binits{R.R.J.S.r.}}:
Federated learning in medicine: facilitating multi-institutional collaborations
  without sharing patient data
\textbf{10}(1),
1--12
(2020)
\end{botherref}
\endbibitem

\bibitem{8737416}
\begin{bchapter}
\bauthor{\bsnm{Wang}, \binits{Z.}},
\bauthor{\bsnm{Song}, \binits{M.}},
\bauthor{\bsnm{Zhang}, \binits{Z.}},
\bauthor{\bsnm{Song}, \binits{Y.}},
\bauthor{\bsnm{Wang}, \binits{Q.}},
\bauthor{\bsnm{Qi}, \binits{H.}}:
\bctitle{Beyond inferring class representatives: User-level privacy leakage
  from federated learning}.
In: \bbtitle{IEEE INFOCOM 2019 - IEEE Conference on Computer Communications},
pp. \bfpage{2512}--\blpage{2520}
(\byear{2019})
\end{bchapter}
\endbibitem

\bibitem{zhao2020idlg}
\begin{botherref}
\oauthor{\bsnm{Zhao}, \binits{B.}},
\oauthor{\bsnm{Mopuri}, \binits{K.R.}},
\oauthor{\bsnm{Bilen}, \binits{H.}}:
idlg: Improved deep leakage from gradients.
arXiv preprint arXiv:2001.02610
(2020)
\end{botherref}
\endbibitem

\bibitem{4568388}
\begin{bchapter}
\bauthor{\bsnm{Yao}, \binits{A.C.}}:
\bctitle{Protocols for secure computations}.
In: \bbtitle{23rd Annual Symposium on Foundations of Computer Science (sfcs
  1982)},
pp. \bfpage{160}--\blpage{164}
(\byear{1982})
\end{bchapter}
\endbibitem

\bibitem{10.1145/359168.359176}
\begin{barticle}
\bauthor{\bsnm{Shamir}, \binits{A.}}:
\batitle{How to share a secret}.
\bjtitle{Commun. ACM}
\bvolume{22}(\bissue{11}),
\bfpage{612}--\blpage{613}
(\byear{1979})
\end{barticle}
\endbibitem

\bibitem{10.1007/3-540-45472-1_7}
\begin{bchapter}
\bauthor{\bsnm{Fouque}, \binits{P.-A.}},
\bauthor{\bsnm{Poupard}, \binits{G.}},
\bauthor{\bsnm{Stern}, \binits{J.}}:
\bctitle{Sharing decryption in the context of voting or lotteries}.
In: \beditor{\bsnm{Frankel}, \binits{Y.}} (ed.)
\bbtitle{Financial Cryptography},
pp. \bfpage{90}--\blpage{104}.
\bpublisher{Springer},
\blocation{Berlin, Heidelberg}
(\byear{2001})
\end{bchapter}
\endbibitem

\bibitem{RN64}
\begin{botherref}
\oauthor{\bsnm{Hardy}, \binits{S.}},
\oauthor{\bsnm{Henecka}, \binits{W.}},
\oauthor{\bsnm{Ivey-Law}, \binits{H.}},
\oauthor{\bsnm{Nock}, \binits{R.}},
\oauthor{\bsnm{Patrini}, \binits{G.}},
\oauthor{\bsnm{Smith}, \binits{G.}},
\oauthor{\bsnm{Thorne}, \binits{B.J.a.p.a..}}:
Private federated learning on vertically partitioned data via entity resolution
  and additively homomorphic encryption
(2017)
\end{botherref}
\endbibitem

\bibitem{RN26}
\begin{botherref}
\oauthor{\bsnm{Paillier}, \binits{P.}}:
Public-key cryptosystems based on composite degree residuosity classes.
In: Eurocrypt,
pp. 223--238.
Springer
\end{botherref}
\endbibitem

\bibitem{froelicher2021scalable}
\begin{barticle}
\bauthor{\bsnm{Froelicher}, \binits{D.}},
\bauthor{\bsnm{Troncoso-Pastoriza}, \binits{J.R.}},
\bauthor{\bsnm{Pyrgelis}, \binits{A.}},
\bauthor{\bsnm{Sav}, \binits{S.}},
\bauthor{\bsnm{Sousa}, \binits{J.S.}},
\bauthor{\bsnm{Bossuat}, \binits{J.-P.}},
\bauthor{\bsnm{Hubaux}, \binits{J.-P.}}:
\batitle{Scalable privacy-preserving distributed learning}.
\bjtitle{Proceedings on Privacy Enhancing Technologies}
\bvolume{2021}(\bissue{2}),
\bfpage{323}--\blpage{347}
(\byear{2021})
\end{barticle}
\endbibitem

\bibitem{wei2020federated}
\begin{barticle}
\bauthor{\bsnm{Wei}, \binits{K.}},
\bauthor{\bsnm{Li}, \binits{J.}},
\bauthor{\bsnm{Ding}, \binits{M.}},
\bauthor{\bsnm{Ma}, \binits{C.}},
\bauthor{\bsnm{Yang}, \binits{H.H.}},
\bauthor{\bsnm{Farokhi}, \binits{F.}},
\bauthor{\bsnm{Jin}, \binits{S.}},
\bauthor{\bsnm{Quek}, \binits{T.Q.}},
\bauthor{\bsnm{Poor}, \binits{H.V.}}:
\batitle{Federated learning with differential privacy: Algorithms and
  performance analysis}.
\bjtitle{IEEE Transactions on Information Forensics and Security}
\bvolume{15},
\bfpage{3454}--\blpage{3469}
(\byear{2020})
\end{barticle}
\endbibitem

\bibitem{7286780}
\begin{barticle}
\bauthor{\bsnm{Goryczka}, \binits{S.}},
\bauthor{\bsnm{Xiong}, \binits{L.}}:
\batitle{A comprehensive comparison of multiparty secure additions with
  differential privacy}.
\bjtitle{IEEE Transactions on Dependable and Secure Computing}
\bvolume{14}(\bissue{5}),
\bfpage{463}--\blpage{477}
(\byear{2017})
\end{barticle}
\endbibitem

\bibitem{article}
\begin{barticle}
\bauthor{\bsnm{Yu}, \binits{H.}},
\bauthor{\bsnm{Yang}, \binits{S.}},
\bauthor{\bsnm{Zhu}, \binits{S.}}:
\batitle{Parallel restarted sgd with faster convergence and less communication:
  Demystifying why model averaging works for deep learning}.
\bjtitle{Proceedings of the AAAI Conference on Artificial Intelligence}
\bvolume{33},
\bfpage{5693}--\blpage{5700}
(\byear{2019})
\end{barticle}
\endbibitem

\bibitem{RN39}
\begin{botherref}
\oauthor{\bsnm{Boneh}, \binits{D.}},
\oauthor{\bsnm{Lewi}, \binits{K.}},
\oauthor{\bsnm{Montgomery}, \binits{H.}},
\oauthor{\bsnm{Raghunathan}, \binits{A.}}:
Key homomorphic prfs and their applications.
In: CRYPTO,
pp. 410--428.
Springer
\end{botherref}
\endbibitem

\bibitem{RN76}
\begin{barticle}
\bauthor{\bsnm{Albrecht}, \binits{M.R.}},
\bauthor{\bsnm{Player}, \binits{R.}},
\bauthor{\bsnm{Scott}, \binits{S.}}:
\batitle{On the concrete hardness of learning with errors}.
\bjtitle{JMC}
\bvolume{9}(\bissue{3}),
\bfpage{169}--\blpage{203}
(\byear{2015})
\end{barticle}
\endbibitem

\bibitem{RN171}
\begin{barticle}
\bauthor{\bsnm{Ernst}, \binits{J.}},
\bauthor{\bsnm{Koch}, \binits{A.}}:
\batitle{Private stream aggregation with labels in the standard model}.
\bjtitle{PETs}
\bvolume{2021}(\bissue{4}),
\bfpage{117}--\blpage{138}
(\byear{2021})
\end{barticle}
\endbibitem

\bibitem{RN172}
\begin{botherref}
\oauthor{\bsnm{Valovich}, \binits{F.}}:
Aggregation of time-series data under differential privacy.
In: INSCRYPT,
pp. 249--270.
Springer
\end{botherref}
\endbibitem

\bibitem{cryptoeprint:2012:144}
\begin{botherref}
\oauthor{\bsnm{Fan}, \binits{J.}},
\oauthor{\bsnm{Vercauteren}, \binits{F.}}:
Somewhat Practical Fully Homomorphic Encryption.
Cryptology ePrint Archive, Report 2012/144
(2012)
\end{botherref}
\endbibitem

\bibitem{cheon2017homomorphic}
\begin{bchapter}
\bauthor{\bsnm{Cheon}, \binits{J.H.}},
\bauthor{\bsnm{Kim}, \binits{A.}},
\bauthor{\bsnm{Kim}, \binits{M.}},
\bauthor{\bsnm{Song}, \binits{Y.}}:
\bctitle{Homomorphic encryption for arithmetic of approximate numbers}.
In: \bbtitle{International Conference on the Theory and Application of
  Cryptology and Information Security},
pp. \bfpage{409}--\blpage{437}
(\byear{2017}).
\bcomment{Springer}
\end{bchapter}
\endbibitem

\bibitem{2014Fully}
\begin{botherref}
\oauthor{\bsnm{Smart}, \binits{N.P.}},
\oauthor{\bsnm{Vercauteren}, \binits{F.}}:
Fully homomorphic simd operations.
Designs,codes and crytography
(2014)
\end{botherref}
\endbibitem

\bibitem{10.1007/978-3-662-53644-5_9}
\begin{bchapter}
\bauthor{\bsnm{Peikert}, \binits{C.}},
\bauthor{\bsnm{Shiehian}, \binits{S.}}:
\bctitle{Multi-key fhe from lwe, revisited}.
In: \beditor{\bsnm{Hirt}, \binits{M.}},
\beditor{\bsnm{Smith}, \binits{A.}} (eds.)
\bbtitle{Theory of Cryptography},
pp. \bfpage{217}--\blpage{238}.
\bpublisher{Springer},
\blocation{Berlin, Heidelberg}
(\byear{2016})
\end{bchapter}
\endbibitem

\bibitem{10.1145/1568318.1568324}
\begin{botherref}
\oauthor{\bsnm{Regev}, \binits{O.}}:
On lattices, learning with errors, random linear codes, and cryptography.
J. ACM
\textbf{56}(6)
(2009)
\end{botherref}
\endbibitem

\bibitem{RN81}
\begin{botherref}
\oauthor{\bsnm{Chen}, \binits{H.}},
\oauthor{\bsnm{Dai}, \binits{W.}},
\oauthor{\bsnm{Kim}, \binits{M.}},
\oauthor{\bsnm{Song}, \binits{Y.}}:
Efficient multi-key homomorphic encryption with packed ciphertexts with
  application to oblivious neural network inference.
In: Proceedings of the 2019 ACM SIGSAC Conference on Computer and
  Communications Security,
pp. 395--412
\end{botherref}
\endbibitem

\bibitem{RN43}
\begin{botherref}
\oauthor{\bsnm{Mouchet}, \binits{C.}},
\oauthor{\bsnm{Troncoso-Pastoriza}, \binits{J.}},
\oauthor{\bsnm{Bossuat}, \binits{J.-P.}},
\oauthor{\bsnm{Hubaux}, \binits{J.-P.J.P.o.P.E.T.}}:
Multiparty homomorphic encryption from ring-learning-with-errors
\textbf{4},
291--311
(2021)
\end{botherref}
\endbibitem

\bibitem{mnist}
\begin{botherref}
\oauthor{\bsnm{Xiao}, \binits{H.}},
\oauthor{\bsnm{Rasul}, \binits{K.}},
\oauthor{\bsnm{Vollgraf}, \binits{R.}}:
Fashion-mnist: a novel image dataset for benchmarking machine learning
  algorithms.
arXiv preprint arXiv:1708.07747
(2017)
\end{botherref}
\endbibitem

\bibitem{DBLP:journals/corr/HeZRS15}
\begin{botherref}
\oauthor{\bsnm{He}, \binits{K.}},
\oauthor{\bsnm{Zhang}, \binits{X.}},
\oauthor{\bsnm{Ren}, \binits{S.}},
\oauthor{\bsnm{Sun}, \binits{J.}}:
Deep residual learning for image recognition.
CoRR: abs/1512.03385
(2015)
{\href{https://arxiv.org/abs/1512.03385}{{1512.03385}}}
\end{botherref}
\endbibitem

\bibitem{2012Learning}
\begin{botherref}
\oauthor{\bsnm{Krizhevsky}, \binits{A.}}:
Learning multiple layers of features from tiny images
(2012)
\end{botherref}
\endbibitem

\bibitem{rnn}
\begin{botherref}
\oauthor{\bsnm{Caldas}, \binits{S.}},
\oauthor{\bsnm{Duddu}, \binits{S.M.K.}},
\oauthor{\bsnm{Wu}, \binits{P.}},
\oauthor{\bsnm{Li}, \binits{T.}},
\oauthor{\bsnm{Konen}, \binits{J.}},
\oauthor{\bsnm{Mcmahan}, \binits{H.B.}},
\oauthor{\bsnm{Smith}, \binits{V.}},
\oauthor{\bsnm{Talwalkar}, \binits{A.}}:
Leaf: A benchmark for federated settings
(2018)
\end{botherref}
\endbibitem

\bibitem{hochreiter1997long}
\begin{barticle}
\bauthor{\bsnm{Hochreiter}, \binits{S.}},
\bauthor{\bsnm{Schmidhuber}, \binits{J.}}:
\batitle{Long short-term memory}.
\bjtitle{Neural computation}
\bvolume{9}(\bissue{8}),
\bfpage{1735}--\blpage{1780}
(\byear{1997})
\end{barticle}
\endbibitem

\bibitem{mcmahan2017communication}
\begin{bchapter}
\bauthor{\bsnm{McMahan}, \binits{B.}},
\bauthor{\bsnm{Moore}, \binits{E.}},
\bauthor{\bsnm{Ramage}, \binits{D.}},
\bauthor{\bsnm{Hampson}, \binits{S.}},
\bauthor{\bparticle{y} \bsnm{Arcas}, \binits{B.A.}}:
\bctitle{Communication-efficient learning of deep networks from decentralized
  data}.
In: \bbtitle{Artificial Intelligence and Statistics},
pp. \bfpage{1273}--\blpage{1282}
(\byear{2017}).
\bcomment{PMLR}
\end{bchapter}
\endbibitem

\bibitem{RN74}
\begin{bchapter}
\bauthor{\bsnm{Banerjee}, \binits{A.}},
\bauthor{\bsnm{Peikert}, \binits{C.}},
\bauthor{\bsnm{Rosen}, \binits{A.}}:
\bctitle{Pseudorandom functions and lattices}.
In: \bbtitle{Annual International Conference on the Theory and Applications of
  Cryptographic Techniques},
pp. \bfpage{719}--\blpage{737}
(\byear{2012})
\end{bchapter}
\endbibitem

\bibitem{albrecht2021homomorphic}
\begin{bchapter}
\bauthor{\bsnm{Albrecht}, \binits{M.}},
\bauthor{\bsnm{Chase}, \binits{M.}},
\bauthor{\bsnm{Chen}, \binits{H.}},
\bauthor{\bsnm{Ding}, \binits{J.}},
\bauthor{\bsnm{Goldwasser}, \binits{S.}},
\bauthor{\bsnm{Gorbunov}, \binits{S.}},
\bauthor{\bsnm{Halevi}, \binits{S.}},
\bauthor{\bsnm{Hoffstein}, \binits{J.}},
\bauthor{\bsnm{Laine}, \binits{K.}},
\bauthor{\bsnm{Lauter}, \binits{K.}}, \betal:
\bctitle{Homomorphic encryption standard}.
In: \bbtitle{Protecting Privacy Through Homomorphic Encryption},
pp. \bfpage{31}--\blpage{62}
(\byear{2021})
\end{bchapter}
\endbibitem

\bibitem{lattigo}
\begin{botherref}
Lattigo v3.
Online: \url{https://github.com/tuneinsight/lattigo}.
EPFL-LDS, Tune Insight SA
(2022)
\end{botherref}
\endbibitem

\end{thebibliography}


\end{document}